\theoremstyle{definition}
\newtheorem{theorem}{\textbf{Theorem}}
\newtheorem{proposition}[theorem]{\textbf{Proposition}}
\date{%
}   
\begin{document}

\title{
{Perturbed Decision-Focused Learning \\ for Modeling Strategic Energy Storage}}

\author[ ]{
	\IEEEauthorblockN{  Ming Yi, \textit{Member, IEEE,} Saud Alghumayjan, \textit{Student Member, IEEE}, Bolun Xu, \textit{Member, IEEE,}  \thanks{
	This work was supported in part by the  Data Science Institute in Columbia University and in part by the National Science Foundation under award ECCS-2239046.  (Corresponding author: Ming Yi.)}	\thanks{ Ming~Yi is with the Data Science Institute, Columbia University, New York City, NY. Saud Alghumayjan, and Bolun Xu  are with Earth and Environmental Engineering, Columbia University, New York City, NY. Email:\{my2826, saa2244, bx2177\}@columbia.edu.}
}}
 


\maketitle 

\begin{abstract}
This paper presents a novel decision-focused framework integrating the physical energy {storage model} into machine learning pipelines.  Motivated by the model predictive control for energy storage,  our end-to-end method incorporates the prior knowledge of the {storage model} and infers the hidden reward that incentivizes energy storage decisions. This is achieved through a dual-layer framework, combining a prediction layer with an optimization layer. We introduce the perturbation idea into the designed decision-focused loss function to ensure the differentiability over linear {storage model}s, supported by a theoretical analysis of the perturbed loss function.  We also develop a hybrid loss function for effective model training.  We provide two challenging applications for our proposed framework: energy storage arbitrage, and energy storage behavior prediction. The numerical experiments on real price data demonstrate that our arbitrage approach achieves the highest profit against existing methods.
The numerical experiments on synthetic and real-world energy storage data show that our approach achieves the best behavior prediction performance against existing benchmark methods, which shows the effectiveness of our method.

\end{abstract}
\begin{IEEEkeywords}
  Differentiable Decision-Focused Framework, Energy Storage Arbitrage, Perturbation Idea, Energy Storage Behavior
\end{IEEEkeywords}

\section{Introduction}


{Over the past decade, energy storage integration has proven essential for economical and reliable power system decarbonization~\cite{EIA23}. Facilitated by Federal Energy Regulatory Commission (FERC) Order 841 \cite{FNRC18}, energy storage systems can now participate in all wholesale energy markets in the United States. Given their limited capacity, energy storage resources must strategically plan their responses based on future price expectations~\cite{harvey2001market}, where this process is known as energy storage arbitrage. Price arbitrage has become one of the most popular services~\cite{zheng2023energy}, whereby energy storage systems discharge when prices are high and charge when prices are low. The energy storage arbitrage provides flexible generation and demand capacity, improving power system efficiency and supporting a more  resilient grid.}

{In practice, most energy storage owners adopt a model predictive control (MPC) framework for storage arbitrage, which decouples price prediction from storage control. They develop proprietary electricity price predictors to capture market fluctuations and input the predicted prices into arbitrage optimization models. However, decoupling prediction and optimization in MPC frameworks introduces storage control and regulation challenges. For storage operators, machine learning predictors often struggle to effectively account for how their predictions will be utilized in the physical energy storage model, particularly given the limited data availability and computing power in practical implementations~\cite{sioshansi2021energy}.}

{Moreover, as each energy storage owner develops their own price predictor to maximize profits, energy storage behaviors are intertwined with electricity pricing dynamics. Given the volatile nature of electricity prices, these arbitrage behaviors exhibit inherent variability, adding uncertainty to the power system and posing significant challenges for grid operation and planning. While energy storage offers promising benefits, its unpredictable actions can disrupt grid stability. Electricity market regulators are increasingly concerned that non-transparent storage arbitrage models may enable operators to exercise market power~\cite{zhou2024energy}, potentially undermining social welfare. As a result, regulators seek systematic approaches to interpret proprietary prediction models and understand the relationship between strategic storage behaviors and market prices. Accurate prediction of energy storage behavior is thus crucial for power system planning and operations, including net-load forecasting, demand response estimation, market power monitoring, and resource adequacy planning.}

{To address these challenges,} this paper proposes a decision-focused framework incorporating the energy {storage behavior} as a perturbed differentiable layer in the learning pipeline. Our proposed framework offers a significant advantage over existing decision-focused approaches \cite{SXL22} in that it does not rely on ground truth prediction information. The result learning model is more efficient at both predicting and optimizing storage operation from the perspective of storage operators or regulators. {The proposed framework is validated in two strategic applications: (1) energy storage arbitrage, which maximizes profit through optimized storage operations, and (2) energy storage behavior prediction, which forecasts storage actions based on historical data.}
The contributions of this paper include: 
\begin{itemize}
  
    \item We develop a decision-focused, end-to-end pipeline incorporating the physical energy {storage model}. The proposed framework includes a prediction layer to infer the hidden reward and an optimization layer to model the decision-making of energy storage. 

    \item We exploit the perturbation idea to solve the differentiable issue in the decision-focused loss function. We also add a predictor regularizer in the loss function to enhance the prediction performance. 
    
    \item We provide a theoretical analysis for the perturbed loss function. Specifically, we prove that the perturbed loss function is convex and that its gradient is Lipschitz continuous. We also show the connection between the perturbed loss function and the original loss function.
    
    \item We validate the proposed framework through two applications:  energy storage arbitrage and forecasting the energy storage behavior. To the best of our knowledge,  this specific formulation for both applications has not been studied before.

    \item The numerical experiments for two applications on synthetic datasets and real-world datasets demonstrate that our approach outperforms the benchmark methods. 
\end{itemize}

The rest of the paper is organized as follows. Section II reviews the related works. Section III introduces the formulation. Section IV presents our proposed end-to-end approach. The numerical results are reported in Section V, and  Section VI concludes our paper.

\section{Motivation and Related Works}
\subsection{Learning-aided Storage Operation}\label{price_pred}

Storage owners can arbitrage electricity prices by charging energy when prices are low and selling it back when prices are high. They can submit bids in the day-ahead market (DAM) and the real-time market (RTM), each with its own market clearing timelines and rules. Various methods have been proposed, such as bi-level optimization~\cite{WDF17} and dynamic programming~\cite{JP152}~\cite{BZX23}. For small-scale energy storage, such as Behind-the-Meter (BTM) energy storage, owners can self-schedule operations without informing the system operator. Many approaches have been proposed, including dynamical programming~\cite{JP15}, model predictive control (MPC)~\cite{ADJ16}\cite{CZZ17}, and reinforcement learning (RL)~\cite{WZ18}. Electricity price fluctuations are the most critical factor in deciding the best bid or schedule options. Various models have been developed over the years to address this uncertainty, employing techniques such as Markov process models~\cite{ZJX22}, frequency analysis~\cite{ZFG22}, and deep neural networks~\cite{LQL23, LWZ22}. {In dynamic programming or MPC setups, the prediction model is trained to minimize forecast error, not decision error, resulting in suboptimal profit maximization. Our proposed decision-focused framework incorporates energy storage decision-making into the model training (i.e., training to minimize decision loss) and enhances arbitrage performance. Additionally, our method avoids the inefficiencies of reinforcement learning (RL),  RL relies on trial-and-error exploration to optimize policies, which can be time-consuming and inefficient for tasks with well-defined objectives, such as energy storage arbitrage. Our method directly measures how well predictions support optimal decisions, providing a more effective solution. Furthermore, RL cannot handle optimization constraints effectively.  RL requires discretizing the action space, and this discretization leads to inefficiencies in representing system dynamics, such as state-of-charge (SoC) transitions, which increase training complexity and reduce interpretability. In contrast, our approach explicitly integrates physical constraints into the model, enabling optimized decision-making and enhancing interpretability.}



\subsection{Learning-aided Storage Monitoring}

There is increasing interest in understanding and modeling how energy-limited resources, including storage and demand-side resources, would behave in electricity markets.
One line of related works is modeling the demand response behavior. Both model-based and data-driven approaches are proposed to model the price-responsive behaviors. The model-based approaches  \cite{FMP21} \cite{K21} first solve a bi-level optimization problem to identify the model parameter and then forecast the future demand response behaviors of buildings. For data-driven approaches, reference \cite{NJ17} employs the Gaussian process to model the demand response of a building with energy storage. \cite{VR19} builds a neural network to model customer's demand response behaviors. 

The most relevant research is the energy storage model identification \cite{BZZ22} \cite{BZZ23}, which aims to identify the unknown parameters of the energy {storage model} and uses a pre-built price predictor to forecast the energy storage behavior based on the identified model. On the one hand, knowing the actual price predictor the energy storage owner uses is difficult. Instead, the system operator usually can only observe the historical energy storage behaviors.  On the other hand, the price predictor mainly focuses on predicting the mean value of the price trend but falls short of capturing the price variations. The behaviors of energy storage arbitrage are based on the price spread instead of the true price value.  Therefore, simply feeding the price prediction into the optimization model may not predict the energy storage behavior well. {Compared to the energy storage model identification approaches in \cite{BZZ22} and \cite{BZZ23}, our method can infer dynamics of price forecasting
 directly from historical energy storage decisions. It then uses the predicted reward to estimate future energy storage behavior. Alternatively, one could directly build a neural network to map input features to observed energy storage behaviors. However, such approaches fail to account for the optimization structure of energy storage, leading to degraded prediction performance. }

\subsection{Decision-Focused Learning}



Decision-focused learning has gained increasing interest in overcoming the limitations of the MPC-type methods in which the prediction model often fails to leverage the structural information by the optimization constraints and objective functions.
One approach is a smart predict-then-optimize framework that integrates optimization into the prediction model and develops a decision-focused surrogate loss function\cite{EG22,ELM20}. This line of research requires the ground truth prediction information for the model training.  Another promising line of research is the learning-by-experience approach~\cite{BJK24,BBT20}. Without costly labeled ground truth prediction data, this method only requires optimal decision information and shows its efficacy in addressing vehicle routing problems.  

Energy {storage model}s have mixed cost terms, including linear and quadratic costs, along with charging/discharging dynamics governed by state-of-charge evolution. The aforementioned methods mainly focus on linear cost functions and lack consideration for the unique characteristics of energy storage systems. While existing work \cite{SXL22} extends the smart predict-then-optimize paradigm to energy storage arbitrage problems, they do not consider degradation costs and depend on perfect forecasting pricing data for model training. Our approach distinguishes \cite{SXL22} by only requiring energy storage decisions rather than relying on ground truth prediction data for training model. This is particular helpful for the scenarios where the ground truth prediction  data is not available. Therefore, we broaden the applicability of decision-focused learning, making it adaptable to various scenarios and offering a more comprehensive solution for strategic energy storage applications.

\section{Problem Formulation}
We first introduce the storage model and the baseline MPC framework, then formulate the learning problem.

\subsection{Energy Storage Arbitrage Model}

We consider a model-predictive control (MPC) type storage arbitrage controller in which the storage operator uses a pre-trained model $g(\bm{x})$ to generate price predictions using  historical features $\bm{x}$, and then the storage model makes the optimal arbitrage decisions by solving \eqref{eq:arb}.

\begin{subequations}
\label{eq:arb}
\begin{align}
    (\bar{p}_t, \bar{b}_t) \in \underset{{p_t, b_t}\in \mathcal{X}}{\textbf{argmax}} \; & \sum_{t=1}^T {\lambda}_{t} (p_t-b_t)-u(p_t,b_t)\,,\; \bm{{\lambda}} = g(\bm{x})\label{eq:1a}\\
    \textrm{s.t.} \; & 0\leq  p_t,b_t \leq P \label{eq:1b} \\
    &p_t = 0 \quad \text{if} \quad \lambda_t <0 \label{eq:1c} \\
    &0\leq  e_t \leq E  \label{eq:1d} \\
    & e_t-e_{t-1} =-p_t/\eta +b_t \eta   \label{eq:1e}
\end{align}
\end{subequations}
\noindent where  $p_t$ and $b_t$ are discharge and charge power at time $t$, while $\bar{p}_t$ and $\bar{b}_t$ are the optimized storage dispatch decisions. $T$ is the time horizon. $\lambda_t$ is the predicted real-time price. $e_t$ is the state-of-charge (SoC). $P$ is the power rating of storage. $\eta$ is the storage efficiency. $E$ is the SoC capacity. $u(p_t,b_t)$ is the charge/discharge cost term and it is a convex function over $p_t, b_t$. For example, $u(p_t,b_t)=C_1p_t+C_2p_t^2+C_3b_t+C_4b_t^2$ models the linear and quadratic discharging cost. The objective of the arbitrage model is to maximize its profit while keeping the charge/discharge decision within the physical constraints. \eqref{eq:1b} sets the upper and lower bounds for the charge and discharge power.  {As noted in \cite{XKB20}, constraint \eqref{eq:1c} is a convex relaxation of the non-convex constraint $p_t b_t =0$ that prevents simultaneous charging and discharging in energy storage. The non-convex constraint can also be convexified with a small error, as shown in \cite{SRL17}}.   \eqref{eq:1d} sets the upper and lower bounds for SoC. \eqref{eq:1e} models the SoC evolution process. In the following, we denote the constraints in \eqref{eq:1b}-\eqref{eq:1e} as the feasibility set $\mathcal{X}$.

\subsection{Problem Statement}

Our algorithm aims to infer the hidden reward $\hat{\bm{{\lambda}}}$ that steers the storage operation to achieve the observed energy storage strategy (charge/discharge decisions).  Subsequently, we use this predicted reward to imitate the observed strategy in future energy storage decisions. In particular, our goal is to establish an end-to-end mapping for predicting forthcoming energy storage decisions.    The model parameters are iteratively updated by minimizing a defined loss function,
\begin{subequations} \label{problem_formu}
\begin{align}
& \underset{\bm{\theta}}{\textbf{min}} \, \mathcal{L}\Big(g(\bm{x}),(\bar{p}_t, \bar{b}_t)\Big)\\
&\hat{p}_t,\hat{b}_t=h_\theta(\hat{\bm{x}}|\bm{\theta})
\end{align}
\end{subequations}
\noindent where ${\bm{x}}$ denotes the historical features for training the learning model. $\hat{\bm{x}}$ denotes the input features for predicting the storage behavior. The pair $(\bar{p}_t, \bar{b}_t)$ denotes the observed storage decisions. The set $\bm{\theta}$ contains all the parameters in the proposed end-to-end framework. $\mathcal{L}$ denotes the loss function for learning the mapping. 

\textit{Remark 1. Energy Storage Arbitrage.}
For the energy storage arbitrage, our algorithm’s objective is to infer the hidden reward and imitate these optimal strategies for future arbitrage decisions. We use historical electricity prices to generate optimal arbitrage decisions by solving \eqref{eq:arb}, then use the optimal decisions as training inputs.

\textit{Remark 2. Energy Storage Behavior Prediction.}
For the energy storage behavior prediction, we take the perspective of a regulator to understand and predict the charge and discharge pattern of a particular storage unit given the observed past storage actions $\bar{p}_t, \bar{b}_t$ and the historical features (prices) $\bm{{x}}$. 
We do not assume knowledge of future price predictions $\bm{{\lambda}}$ that the storage used to optimize its operation. 

Our algorithm aims to infer the hidden reward $\hat{\bm{{\lambda}}}$ that steers the storage operation from the historical energy storage decisions and then uses the predicted reward to obtain the future energy storage behavior prediction.   In particular, our objective is to establish an end-to-end mapping for predicting future energy storage decisions.

\section{Methodology}

\begin{figure*}[!ht]
  \centering
  \includegraphics[width = 1\linewidth]{./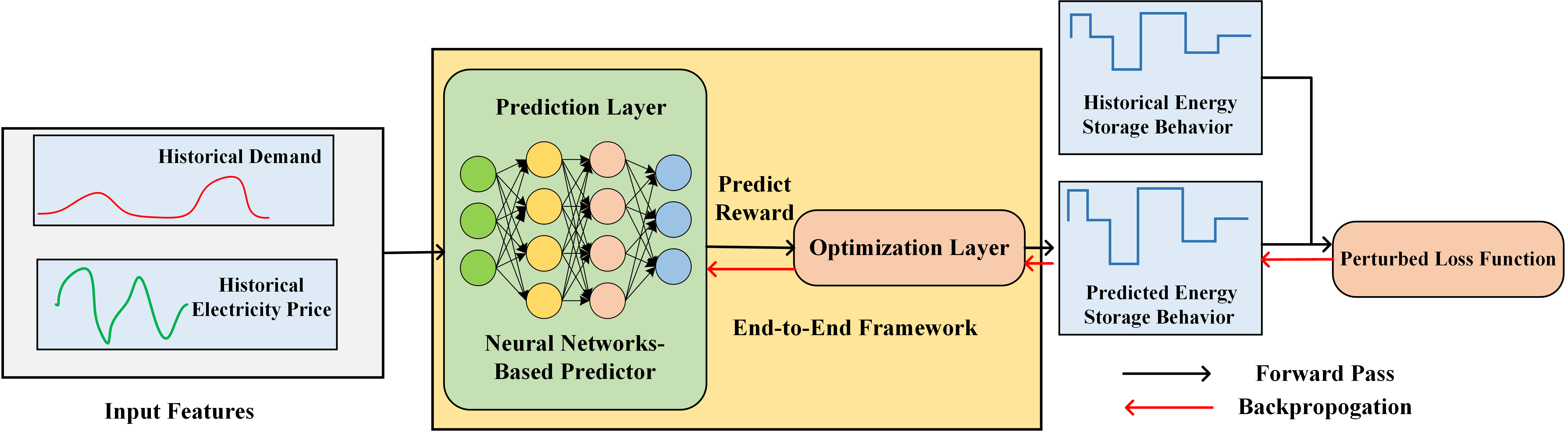} 
    	\caption{{The pipeline of proposed decision-focused prediction approach. Given the input features, the neural network-based predictor first predicts the hidden reward. Subsequently, the optimization layer utilizes this hidden reward to calculate the decision by solving an optimization problem. The algorithm then conducts backward propagation to update the weights in the predictor, based on a perturbed decision-focused loss function.}} \label{Fig: flowchart}
\end{figure*}

Inspired by the MPC control design used in real-world storage arbitrage controls, our framework consists of two components: the prediction layer and the optimization layer. The prediction layer takes the historical feature as the input and outputs the reward. The pipeline embeds the optimization layer as an additional layer. Therefore, this additional layer must support the forward and backward pass to make it compatible with the prediction layer. 
The idea of our proposed approach is illustrated in Fig. \ref{Fig: flowchart}. 
In the following, we will first detail the prediction layer and the optimization layer and then introduce the learning process.

\subsection{The Prediction Layer and the Optimization Layer}

Given the input feature $\bm{x}\in \mathbb{R}^{T\times l}$, where $T$ is the length of time series, and $l$ is the number of different types of features. We build the prediction layer based on deep learning neural networks; this layer's output is the reward prediction. The neural network is parameterized by the weight $\bm{w}$,
\begin{equation}
    \hat{\bm{\lambda}}=g_w(\bm{x})
\end{equation}
where $\hat{\bm{\lambda}} =[\hat{{\lambda}}_1, \hat{{\lambda}}_2,...,\hat{{\lambda}}_T]^T \in \mathbb{R}^{T}$ denotes all rewards $\hat{\lambda}_t$ in time horizon $T$.
The $g_w(.)$ can be any differentiable predictor. In our experiments, we adopt a neural network model as our predictor. As energy storage behaviors strongly correlate with electricity prices and demand, we can include the previous days' day-ahead price (DAP), real-time price (RTP), and demand data as the input features. 

After the hidden reward $\hat{\bm{\lambda}}$ is predicted, the formulation in equation (\ref{eq:arb}) is employed to obtain the storage decisions. The optimization layer $f$ is therefore defined as:
\begin{equation}\label{obj1}
\begin{aligned}
f: \hat{\bm{\lambda}} \mapsto \underset{{p_t, b_t \in \mathcal{X}}}{\textbf{argmax}} & \{ \sum_{t=1}^T \hat{\lambda}_{t} (p_t-b_t)-u(p_t,b_t) \}\\
\end{aligned}
\end{equation}

To simplify the notation, we add additional constraints $y_t=p_t-b_t$ in $\mathcal{X}$ and consider a more general model,
\begin{equation}
\begin{aligned}
f: \hat{\bm{\lambda}} \mapsto \underset{{y_t \in \mathcal{X}}}{\textbf{argmax}} & \{\sum_{t=1}^T \hat{\lambda}_{t} y_t-u(y_t) \}= \underset{{\bm{y} \in \mathcal{X}}}{\textbf{argmax}} \, \{\hat{\bm{\lambda}}^{T} \bm{y}-u(\bm{y}) \}\\
\end{aligned}
\end{equation}
where $\bm{y}=[y_1,y_2,...,y_T]^T \in \mathbb{R}^T$ represents the charge and discharge decisions $(p_t, b_t), t= 1,2,...,T$. $u(\bm{y})$ denotes the total cost: $u(\bm{y})=\sum_{t=1}^Tu(y_t)$. Specifically, the maximization problem is defined as $F(\hat{\bm{\lambda}})=\underset{{\bm{y} \in \mathcal{X}}}{\textbf{max}} \,  \{\hat{\bm{\lambda}}^T \bm{y}-u(\bm{y}) \}$ and the corresponding optimal solution is $\bm{y}^*(\hat{\bm{\lambda}})=\underset{{\bm{y} \in \mathcal{X}}}{\textbf{argmax}} \,  \{\hat{\bm{\lambda}}^T \bm{y}-u(\bm{y})\}$. 
Then the end-to-end policy which maps the input features to the energy storage behaviors is defined as:
\begin{equation}\label{eq:mapping_nn}
\bm{x} \mapsto \bm{y} = f(g_w(\bm{x}))
\end{equation}

\subsection{ The Learning Approach}
To ensure good performance with the proposed algorithm,  it is crucial to compute the weights of the predictor in a manner that aligns the prediction $\bm{y}$ close to the target energy storage decision $\bar{\bm{y}}$. The learning approach involves training a decision-focused end-to-end framework to follow the decisions taken by an unknown policy.

\textbf{Dataset.}
Our learning problem is in a supervised setting and we build a dataset as follows, 
\begin{equation}
\mathcal{D}=\{(\bm{x}^1,\bar{\bm{y}}^1), (\bm{x}^2, \bar{\bm{y}}^2),...,(\bm{x}^n, \bar{\bm{y}}^n)\}
\end{equation}
where $\bm{x}$ denotes the input features,  and $\bm{x}$ includes historical price and/or demand data.  $\bar{\bm{y}}$ denotes the target energy storage decisions. $n$ is the number of data samples in the dataset. For the energy storage arbitrage problem, $\{\bar{\bm{y}}^1,\bar{\bm{y}}^2,...,\bar{\bm{y}}^n \}$ represents the optimal arbitrage decisions generated from historical real-time prices.   For the behavior prediction problem, $\{\bar{\bm{y}}^1,\bar{\bm{y}}^2,...,\bar{\bm{y}}^n \}$ represents the historical observed energy storage behaviors.  


In order to achieve optimal performance of the prediction of energy storage decisions, it is essential to design an efficient loss function to update the weights $\bm{w}$ of the neural network. This ensures that the prediction of energy storage decisions aligns closely with the target storage decisions.

\textbf{Learning problem.} The learning problem is to find the optimal $\bm{w}^*$ such that the loss function $\mathcal{L}(.)$ is minimized over the training dataset $\mathcal{D}$.
\begin{equation}
\begin{aligned}
\bm{w}^* =\underset{\bm{w}}{\textbf{argmin}} & \sum_{i=1}^n \mathcal{L}(g_w(\bm{x}^i),\bar{\bm{y}}^i)\\
\end{aligned}
\end{equation}
To ensure the prediction follows the target energy storage decisions, it is natural to define a decision-focused loss function such that the computed charge/discharge decisions result in the same profit as the target decisions. Then given estimated reward $\hat{\bm{\lambda}}$, the target charge and discharge decisions $\bar{\bm{y}}$, the decision-focused loss function is defined as
\begin{equation}\label{eq:loss11}
\begin{aligned}
\mathcal{L}^{FY}(\hat{\bm{\lambda}},\bar{\bm{y}})  &=
 \underset{{\bm{y} \in \mathcal{X}}}{\textbf{max}} \,  \{\hat{\bm{\lambda}}^T \bm{y}-u(\bm{y}) \}- ( \hat{\bm{\lambda}} \bar{\bm{y}}
 -u(\bar{\bm{y}}))\\
  &= \{\hat{\bm{\lambda}}^T \bm{y}^*(\hat{\bm{\lambda}})-u(\bm{y}^*(\hat{\bm{\lambda}})) \}- ( \hat{\bm{\lambda}} \bar{\bm{y}}
 -u(\bar{\bm{y}}))\\
\end{aligned}
\end{equation}
where $\bar{\bm{y}}$ represents the set of target charge and discharge decisions $y_t, t= 1,2,...,T$. {$\hat{\bm{\lambda}}^T$ denotes the transpose of  $\hat{\bm{\lambda}}$.} It is clear that $\mathcal{L}^{FY}(\hat{\bm{\lambda}},\bar{\bm{y}}) \geq 0$ as $\bm{y}^*(\hat{\bm{\lambda}})$ is the maximizer  given $\hat{\bm{\lambda}}$. The equality $\mathcal{L}^{FY}(\hat{\bm{\lambda}},\bar{\bm{y}})= 0$ holds only when $\bm{y}^*(\hat{\bm{\lambda}})=\bar{\bm{y}}$. Unfortunately, the loss function in \eqref{eq:loss11} is not continuous because  $\bm{y}^*(\hat{\bm{\lambda}})$ may not be continuous in $\hat{\bm{\lambda}}$\cite{EG22}\cite{BBT20}.   The discontinuity poses computation difficulties when applying backpropagation to update the weights. We employ the perturbation idea in reference \cite{BBT20} by perturbing the reward $\hat{\bm{\lambda}}$ with additive noise. Then the perturbed  optimizer is:
\begin{equation}
\begin{aligned}
f_\epsilon: \hat{\bm{\lambda}} \rightarrow \mathbb{E}[\underset{{\bm{y} \in \mathcal{X}}}{\textbf{argmax}} & (\hat{\bm{\lambda}} +\epsilon \bm{Z})^T\bm{y}-u(\bm{y})]\\
\end{aligned}
\end{equation}

\noindent where $\bm{Z}$ is additive Gaussian noise $\mathcal{N}(\bm{0},\bm{I})$.  $\epsilon$ is a positive scaling parameter. $\mathbb{E}[.]$ denotes taking the expectation over the variable $\bm{Z}$.  The perturbed maximization problem is defined as $F_\epsilon(\hat{\bm{\lambda}})=\mathbb{E}[\underset{{\bm{y} \in \mathcal{X}}}{\textbf{max}}  (\hat{\bm{\lambda}} +\epsilon \bm{Z})^T\bm{y}-u(\bm{y})]$, and the corresponding optimal solution is $\bm{y}_\epsilon^*(\hat{\bm{\lambda}})=\mathbb{E}[\underset{{\bm{y} \in \mathcal{X}}}{\textbf{argmax}}  (\hat{\bm{\lambda}} +\epsilon \bm{Z})^T\bm{y}-u(\bm{y})]$. Thus, the perturbed loss function is defined as:

\begin{equation} \label{eq: loss1}
\resizebox{0.49\textwidth}{!}{$
\begin{aligned}
 & \mathcal{L}_\epsilon^{PFY}(\hat{\bm{\lambda}},\bar{\bm{y}} )= \mathbb{E}[\underset{\bm{y}\in \mathcal{X}}{\textbf{max}}  \{ (\hat{\bm{\lambda}} +\epsilon \bm{Z})^T \bm{y}-u(\bm{y})\}]- (\hat{\bm{\lambda}}^T\bar{\bm{y}}-u(\bar{\bm{y}}))\\
\end{aligned}
$}
\end{equation}

\subsection{Convex and Smooth Perturbed Loss Function}
We will show in the following Propositions that the perturbed loss function is both convex and smooth, which would facilitate efficient training in the learning pipeline. Our work is motivated by recent work~\cite{BBT20}. Due to the space limit, we provide detailed proofs to all propositions in the Appendix\cite{YAX24}\footnote{The Appendix can be found at: https://arxiv.org/pdf/2406.17085}.


\begin{proposition}\emph{The differentiablity of perturbed function.}
As noise $\bm{Z}$ is from Gaussian distribution, it has the density $\rho(\bm{Z})\propto \text{exp}(-\psi(\bm{Z}))$. For $\mathcal{R}_{\mathcal{X}}=\underset{{y\in \mathcal{X}}}{\text{max}}||\bm{y}_{\epsilon}^*(\hat{\bm{\lambda}})||$, we have
 \begin{itemize}
 \item $F_{\epsilon}(\hat{\bm{\lambda}}) $ is differentiable, and 
$\nabla_{\hat{\bm{\lambda}}} F_{\epsilon}(\hat{\bm{\lambda}})=\bm{y}_{\epsilon}^*(\hat{\bm{\lambda}}) =\mathbb{E}[ y^*(\hat{\bm{\lambda}}+\epsilon \bm{Z})]=\mathbb{E}[F(\hat{\bm{\lambda}}+\epsilon \bm{Z})\nabla_Z \psi(\bm{Z})/\epsilon]$. \\
 \item $F_{\epsilon}(\hat{\bm{\lambda}})$ is Lipschitz continuous, and  $||F_{\epsilon}(\hat{\bm{\lambda}})-F_{\epsilon}(\tilde{\bm{\lambda}})||
 \leq \mathcal{R}_{\mathcal{X}}|| \hat{\bm{\lambda}}-\tilde{\bm{\lambda}}||.$
\end{itemize}  
\end{proposition}

\noindent where $\propto$ denotes ``proportional to.'' Proposition 1 proves that $F_\epsilon(\hat{\bm{\lambda}})$ is differentiable and Lipschitz continuous, we demonstrate that by adding appropriate Gaussian noise $\bm{Z}$, the perturbed function becomes differentiable with respect to $\hat{\bm{\lambda}}$. {The non-differentiability issue in the non-perturbed problem \eqref{obj1} arises from the linear energy storage model, where linear programming often involves a piecewise linear structure.
 The solution from the linear energy storage model typically lies on the vertices of a polytope. Consequently, small changes in the input price predictions can cause abrupt shifts in the solutions, leading to non-differentiability of the corresponding loss function. Adding perturbation smooths these abrupt transitions and ensures the perturbed optimization function $F_\epsilon(\hat{\bm{\lambda}})$ differentiable. } The equations from the first statement of Proposition 1 will be used in subsequent proofs. The Lipschitz continuity of $F_\epsilon(\hat{\bm{\lambda}})$ helps establish the Lipschitz continuity of ${\nabla \mathcal{L}_\epsilon^{PFY}(\hat{\bm{\lambda}},\bar{\bm{y}})}$ in Proposition 4. { The Lipschitz continuity ensures that the function does not change abruptly and the function is continuous.} Given the perturbation, we aim to understand the impact of $\epsilon$ and how the optimal solution $\bm{y}_{\epsilon}^*(\hat{\bm{\lambda}})$ behaves as $\epsilon$ varies. Consequently, we introduce Proposition 2:

\begin{proposition}\emph{Impact of scaling factor $\epsilon$.}
 \begin{itemize}
 \item $\bm{y}_{\epsilon}^*(\hat{\bm{\lambda}})$ is in the interior of $\mathcal{X}$. 
 \item For $\hat{\bm{\lambda}}$ such that $\bm{y}^*({\hat{\bm{\lambda}}})$ is a unique maximizer: when $\epsilon \rightarrow \infty$, $\bm{y}_{\epsilon}^*(\hat{\bm{\lambda}})\rightarrow \bm{y}_{1}^*(\frac{\hat{\bm{\lambda}}}{\epsilon})=\epsilon \mathbb{E}[ \underset{{\bm{y}\in \mathcal{X}}}{\textbf{argmax}} \{\bm{Z}^T\bm{y}\}]$; when  $\epsilon \rightarrow 0$, $\bm{y}_{\epsilon}^*(\hat{\bm{\lambda}})\rightarrow \bm{y}^*({\hat{\bm{\lambda}}}).$
\end{itemize}  
\end{proposition}

Proposition 2 indicates that  the perturbed  solution $\bm{y}_\epsilon^*({\hat{\bm{\lambda}}})$ remains feasible. If $\bm{y}^*({\hat{\bm{\lambda}}})$ is a unique maximizer, as $\epsilon \rightarrow 0$, $\bm{y}_{\epsilon}^*(\hat{\bm{\lambda}})$ converges to the original optimal solution $\bm{y}^*({\hat{\bm{\lambda}}})$. Conversely, as $\epsilon \rightarrow \infty$, $\bm{y}_{\epsilon}^*(\hat{\bm{\lambda}})$ deviates from the original solution $\bm{y}^*({\hat{\bm{\lambda}}})$ and approaches $\bm{y}_{1}^*(\frac{\hat{\bm{\lambda}}}{\epsilon})$. Since $\bm{Z}$ is drawn from a zero-mean Gaussian distribution, $\bm{y}_{1}^*(\frac{\hat{\bm{\lambda}}}{\epsilon})$ converges to a zero vector. In the context of our energy storage model, as $\epsilon$ tends to infinity, it implies that the energy storage system remains inactive and does not take any action.

This proposition implies that we should carefully select the scaling factor $\epsilon$. A very small $\epsilon$ may not ensure sufficient smoothness of the loss function, while a very large $\epsilon$ may cause the solution to deviate significantly from the original optimal solution. In our experimental setup, $\epsilon$ is set between 1 and 10.


\begin{proposition}\emph{The property of perturbed loss function $\mathcal{L}_\epsilon^{PFY}(\hat{\bm{\lambda}},\bar{\bm{y}} )$.}
 \begin{itemize}
 \item $\mathcal{L}_\epsilon^{PFY}(\hat{\bm{\lambda}},\bar{\bm{y}} )$ is a  convex function of  ${\hat{\bm{\lambda}}}$.
 \item $ 0\leq \mathcal{L}_\epsilon^{PFY}(\hat{\bm{\lambda}},\bar{\bm{y}} ) -\mathcal{L}^{FY}(\hat{\bm{\lambda}},\bar{\bm{y}})  \leq TP \epsilon$.
\end{itemize}  
\end{proposition}

\noindent where $T$ represents the length of the prediction horizon, and $P$ represents the power rating. 
The left inequality in the second statement implies that the perturbed loss function serves as an upper bound for the original loss function, $\mathcal{L}^{FY}(\hat{\bm{\lambda}},\bar{\bm{y}}) \leq \mathcal{L}_\epsilon^{PFY}(\hat{\bm{\lambda}},\bar{\bm{y}})$. The right inequality indicates that the distance between these two loss functions is bounded by the scaling factor $\epsilon$. As $\epsilon \rightarrow 0$, these two inequalities together imply the convergence $\mathcal{L}_\epsilon^{PFY}(\hat{\bm{\lambda}},\bar{\bm{y}}) \rightarrow \mathcal{L}^{FY}(\hat{\bm{\lambda}},\bar{\bm{y}})$.

\begin{proposition}\emph{Gradient of perturbed loss function.}

 \begin{itemize}
 \item The gradient of the perturbed loss function is
  \begin{equation} \label{eq: prop4}
\begin{aligned}
  & {\nabla \mathcal{L}_\epsilon^{PFY}(\hat{\bm{\lambda}},\bar{\bm{y}})} =\bm{y}_\epsilon^*(\hat{\bm{\lambda}})-\bar{\bm{y}}. 
\end{aligned}
\end{equation}
 \item ${\nabla \mathcal{L}_\epsilon^{PFY}(\hat{\bm{\lambda}},\bar{\bm{y}})} $ is Lipschitz continuous.
\end{itemize}  
\end{proposition}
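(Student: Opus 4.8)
The plan is to exploit the additive structure of the perturbed loss and to reduce both claims to properties of the perturbed maximizer \(\bm{y}_\epsilon^*(\hat{\bm{\lambda}})\). Writing \(\mathcal{L}_\epsilon^{PFY}(\hat{\bm{\lambda}},\bar{\bm{y}}) = F_\epsilon(\hat{\bm{\lambda}}) - \hat{\bm{\lambda}}^T \bar{\bm{y}} + u(\bar{\bm{y}})\), the last two terms are affine, respectively constant, in \(\hat{\bm{\lambda}}\), so the first statement follows immediately by differentiating term by term: the gradient of \(-\hat{\bm{\lambda}}^T\bar{\bm{y}}\) is \(-\bar{\bm{y}}\), the constant \(u(\bar{\bm{y}})\) drops out, and the first statement of Proposition 1 supplies \(\nabla_{\hat{\bm{\lambda}}} F_\epsilon(\hat{\bm{\lambda}}) = \bm{y}_\epsilon^*(\hat{\bm{\lambda}})\). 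Combining these yields \(\nabla \mathcal{L}_\epsilon^{PFY}(\hat{\bm{\lambda}},\bar{\bm{y}}) = \bm{y}_\epsilon^*(\hat{\bm{\lambda}}) - \bar{\bm{y}}\).

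For the second statement, since \(\bar{\bm{y}}\) does not depend on \(\hat{\bm{\lambda}}\), Lipschitz continuity of the gradient is equivalent to Lipschitz continuity of \(\hat{\bm{\lambda}} \mapsto \bm{y}_\epsilon^*(\hat{\bm{\lambda}})\), which I would establish by bounding the operator norm of its Jacobian \(\nabla_{\hat{\bm{\lambda}}}\bm{y}_\epsilon^*(\hat{\bm{\lambda}}) = \nabla^2_{\hat{\bm{\lambda}}} F_\epsilon(\hat{\bm{\lambda}})\) uniformly in \(\hat{\bm{\lambda}}\). The crucial manoeuvre is to write the perturbed maximizer as a convolution against the Gaussian kernel, \(\bm{y}_\epsilon^*(\hat{\bm{\lambda}}) = \int \bm{y}^*(\bm{w})\, \rho((\bm{w}-\hat{\bm{\lambda}})/\epsilon)\, \epsilon^{-T}\, d\bm{w}\), via the change of variables \(\bm{w} = \hat{\bm{\lambda}} + \epsilon \bm{Z}\), so that the dependence on \(\hat{\bm{\lambda}}\) sits entirely in the smooth density. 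Differentiating under the integral, using the identity \(\nabla_{\hat{\bm{\lambda}}}\rho((\bm{w}-\hat{\bm{\lambda}})/\epsilon) = \epsilon^{-1}\rho(\cdot)\,\nabla_Z\psi(\cdot)\) already invoked in Proposition 1, and changing variables back gives
\[
\nabla_{\hat{\bm{\lambda}}}\bm{y}_\epsilon^*(\hat{\bm{\lambda}}) = \frac{1}{\epsilon}\,\mathbb{E}\big[\bm{y}^*(\hat{\bm{\lambda}}+\epsilon\bm{Z})\,\nabla_Z\psi(\bm{Z})^T\big].
\]

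With the Jacobian in closed form, I would bound its operator norm. For any unit vectors \(\bm{a},\bm{b}\), Cauchy--Schwarz gives \(\bm{a}^T \nabla_{\hat{\bm{\lambda}}}\bm{y}_\epsilon^*(\hat{\bm{\lambda}})\, \bm{b} \le \frac{1}{\epsilon}\mathbb{E}[|\bm{a}^T\bm{y}^*(\hat{\bm{\lambda}}+\epsilon\bm{Z})|\,|\bm{Z}^T\bm{b}|]\); here \(\|\bm{y}^*(\cdot)\| \le \mathcal{R}_\mathcal{X}\) because the feasible set \(\mathcal{X}\) is compact, as the box constraint \(0\le p_t,b_t\le P\) bounds every \(y_t\), and for Gaussian noise \(\nabla_Z\psi(\bm{Z})=\bm{Z}\) with \(\bm{Z}^T\bm{b}\sim\mathcal{N}(0,1)\), so \(\mathbb{E}[|\bm{Z}^T\bm{b}|]=\sqrt{2/\pi}\). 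This produces the uniform bound \(\|\nabla^2_{\hat{\bm{\lambda}}} F_\epsilon(\hat{\bm{\lambda}})\|_{op}\le \mathcal{R}_\mathcal{X}\sqrt{2/\pi}/\epsilon\), and the mean value inequality then delivers the Lipschitz constant \(L = \mathcal{R}_\mathcal{X}\sqrt{2/\pi}/\epsilon\) for \(\nabla \mathcal{L}_\epsilon^{PFY}\).

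I expect the main obstacle to be the rigorous justification of differentiating under the integral sign and interchanging \(\nabla_{\hat{\bm{\lambda}}}\) with \(\mathbb{E}[\cdot]\): the map \(\bm{y}^*(\bm{w})\) is only defined up to the measure-zero set where the argmax is non-unique and is generally not differentiable, so it cannot be differentiated directly. The change-of-variables trick is precisely what circumvents this, transferring all \(\hat{\bm{\lambda}}\)-dependence onto the infinitely smooth Gaussian density; the remaining technical point is to verify a dominated-convergence and integrability condition, namely finiteness of the relevant moments of \(\bm{Z}\) against the bounded \(\bm{y}^*\), so that differentiation under the integral is legitimate. Once that is in place, the operator-norm estimate is routine.
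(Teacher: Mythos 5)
Your proof is correct, but it reaches both conclusions by a different route than the paper. For the gradient formula, the paper does not differentiate the decomposition $\mathcal{L}_\epsilon^{PFY}(\hat{\bm{\lambda}},\bar{\bm{y}})=F_\epsilon(\hat{\bm{\lambda}})-\hat{\bm{\lambda}}^T\bar{\bm{y}}+u(\bar{\bm{y}})$ term by term as you do; it instead writes a chain of inequalities showing that $\bm{y}_\epsilon^*(\hat{\bm{\lambda}})-\bar{\bm{y}}$ satisfies the subgradient inequality for $\mathcal{L}_\epsilon^{PFY}$ at every $\hat{\bm{\lambda}}$, and then upgrades the subgradient to the gradient by smoothness. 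Your version is shorter and leans directly on Proposition 1, which already supplies $\nabla_{\hat{\bm{\lambda}}}F_\epsilon(\hat{\bm{\lambda}})=\bm{y}_\epsilon^*(\hat{\bm{\lambda}})$; the two arguments are logically equivalent here. For Lipschitz continuity the difference is more substantive: the paper never forms the Hessian. It uses the other representation from Proposition 1, $\bm{y}_\epsilon^*(\hat{\bm{\lambda}})=\mathbb{E}[F(\hat{\bm{\lambda}}+\epsilon\bm{Z})\nabla_Z\psi(\bm{Z})/\epsilon]$, subtracts the two gradients, and applies Cauchy--Schwarz together with the $\mathcal{R}_{\mathcal{X}}$-Lipschitz continuity of $F$, arriving at the constant $\mathcal{R}_{\mathcal{X}}M/\epsilon$ with $M=\mathbb{E}[\|\nabla_Z\psi(\bm{Z})\|^2]^{1/2}$ (equal to $\sqrt{T}$ for standard Gaussian noise). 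You instead differentiate once more under the integral, obtain the Jacobian $\tfrac{1}{\epsilon}\mathbb{E}[\bm{y}^*(\hat{\bm{\lambda}}+\epsilon\bm{Z})\nabla_Z\psi(\bm{Z})^T]$, and bound its operator norm by $\mathcal{R}_{\mathcal{X}}\sqrt{2/\pi}/\epsilon$. Your route requires the additional (but, as you note, easily justified by boundedness of $\bm{y}^*$ on the compact feasible set) interchange of differentiation and expectation, and in return yields a dimension-free Lipschitz constant that is tighter than the paper's for every $T\ge 1$; the paper's route avoids second derivatives entirely and works only with the first-order representation. Both arguments are sound.
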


%

Equation $\eqref{eq: prop4}$ shows that the perturbed loss function has a convenient gradient, which can be computed by solving for the optimal solution and subtracting $\bar{\bm{y}}$. Lipschitz continuity ensures that the gradient changes smoothly, which is beneficial for training the learning problem. Hence, we have shown that the perturbed loss function is both convex and smooth.

\noindent \textit{Convergence Analysis of the Optimization Layer:} As shown in later equation \eqref{sub:total}, the total gradient with respect to the network weights $\bm{w}$ relies on the chain rule. However, conducting a comprehensive convergence analysis is beyond the scope of this paper as it requires several assumptions regarding neural networks to complete the proof. Without loss of generality, we focus on the perturbed optimization layer and consider $\mathcal{L}_\epsilon^{PFY}(\hat{\bm{\lambda}},\bar{\bm{y}})$ as a function of $\hat{\bm{\lambda}}$. Given that $\mathcal{L}_\epsilon^{PFY}(\hat{\bm{\lambda}},\bar{\bm{y}} )$ is a  convex function of  ${\hat{\bm{\lambda}}}$ in Proposition 3, and ${\nabla \mathcal{L}_\epsilon^{PFY}(\hat{\bm{\lambda}},\bar{\bm{y}})} $ is Lipschitz continuous in Proposition 4, we provide a  convergence analysis for the optimization layer in Appendix.






\subsection{Algorithmic Implementation}
Since the exact computation of the expectation in \eqref{eq: loss1} is generally intractable, we approximate it using Monte Carlo sampling in the solution algorithm implementation. {Monte Carlo sampling approximates expectations by averaging over $K$ number of random samples. }The perturbed loss function is approximated as:
\begin{equation} \label{eq: perturbsedloss}
\begin{aligned}
 & \mathcal{L}_\epsilon^{PFY}(\hat{\bm{\lambda}},\bar{\bm{y}} ) \\
  & \approx \frac{1}{K}\sum_{m=1}^K[\underset{\bm{y} \in \mathcal{X}}{\textbf{max}}  \{ (\hat{\bm{\lambda}} +\epsilon \bm{Z}^{(m)})^T \bm{y}-u({\bm{y}})\}]- (\hat{\bm{\lambda}}^T\bar{\bm{y}}-u(\bar{\bm{y}}))
\end{aligned}
\end{equation}
\noindent where  $K$ is the number of Monte-Carlo samples. The gradient ${\nabla \mathcal{L}_\epsilon^{PFY}(\hat{\bm{\lambda}},\bar{\bm{y}})}$ is computed as follows:


\begin{equation}
\begin{aligned}
 & {\nabla \mathcal{L}_\epsilon^{PFY}(\hat{\bm{\lambda}},\bar{\bm{y}})} =\bm{y}_\epsilon^*(\hat{\bm{\lambda}}) -\bar{\bm{y}} \\
&  = \mathbb{E}[\underset{\bm{y} \in \mathcal{X}}{\textbf{argmax}}
 (\hat{\bm{\lambda}} +\epsilon \bm{Z})^T\bm{y}-u(\bm{y})] -\bar{\bm{y}}\\
 &  \approx \frac{1}{K}\sum_{m=1}^K[\underset{{\bm{y}\in \mathcal{X}}}{\textbf{argmax}}
 (\hat{\bm{\lambda}} +\epsilon \bm{Z}^{(m)})^T\bm{y}-u(\bm{y})] -\bar{\bm{y}}
\end{aligned}
\end{equation}

Given that the unknown predictor is designed to forecast real-time prices, we can guide the inferred reward to align with the price information by leveraging it as prior knowledge. To achieve this, we introduce a predictor
regularizer, which minimizes the mean squared error (MSE) between the reward and the price. The price data may consist of either previous days' real-time prices or day-ahead prices. Consequently, the total loss function is constructed as the weighted sum of the perturbed loss $\mathcal{L}^{PFY}_{\epsilon}(\hat{\bm{\lambda}},\bar{\bm{y}})$ and the MSE loss. This hybrid loss function is defined as:
\begin{equation}\label{eq:hybrid}
\mathcal{L}_\epsilon(\hat{\bm{\lambda}},\bar{\bm{y}}) = \mathcal{L}^{PFY}_{\epsilon}(\hat{\bm{\lambda}},\bar{\bm{y}}) + \beta \sum_{t=1}^T(\hat{\lambda}_t- \xi_t)^2
\end{equation}
The parameter $\beta$ represents the weighted factor of the MSE loss, and $\xi_t$ corresponds to the RTP or DAP from the previous day, serving as one of the features in $x_t$. The total subgradient is then computed as:
\begin{equation} 
\begin{aligned}
 & {\nabla \mathcal{L}_\epsilon(\hat{\bm{\lambda}},\bar{\bm{y}})}
  =\bm{y}_\epsilon^*(\hat{\bm{\lambda}}) -\bar{\bm{y}} +2\beta (\hat{\bm{\lambda}}- \bm{\xi})\\
  &\approx \frac{1}{K}\sum_{m=1}^K[\underset{{\bm{y}\in \mathcal{X}}}{\textbf{argmax}}
 (\hat{\bm{\lambda}} +\epsilon \bm{Z}^{(m)})^T\bm{y}-u(\bm{y})] -\bar{\bm{y}}+\\&2\beta (\hat{\bm{\lambda}}- \bm{\xi})\\
\end{aligned}
\end{equation}

For the backward propagation, we apply the chain rule to compute the total gradient with respect to the weights $\bm{w}$ of predictors,
\begin{equation}\label{sub:total}
 {\nabla_{\bm{w}} \mathcal{L}_\epsilon(\hat{\bm{\lambda}},\bar{\bm{y}})}=  {\nabla \mathcal{L}_\epsilon(\hat{\bm{\lambda}},\bar{\bm{y}})} \cdot \frac{\partial \hat{\bm{\lambda}}}{\partial \bm{w}}
\end{equation}
\noindent where the term $\frac{\partial \hat{\bm{\lambda}}}{\partial \bm{w}}$ represents the Jacobian of reward prediction with respect to the weights $\bm{w}$. The Jacobian can be computed using automatic differentiation \cite{PGS17} in a standard deep learning framework. The details of behavior prediction are outlined in Algorithm 1. The energy storage arbitrage follows a similar process as described in Algorithm 2. Due to space limitations, Algorithm 2 is moved to the Appendix \cite{YAX24}. Readers can refer to the Appendix \cite{YAX24} for more details.

\begin{algorithm} [!ht]
\caption{The Prediction of Strategic Energy Storage Behavior} 
\begin{algorithmic}[1]  \label{alg1}
\REQUIRE The time horizon $T$; The power rating $P$ of energy storage; The SoC efficiency $\eta$ and capacity $E$;   The maximum epochs $M_{\max}$; The cost coefficients in $u(p_t, b_t)$; The training data $\mathcal{D}=\{(\bm{x}^1,\bar{\bm{y}}^1), (\bm{x}^2, \bar{\bm{y}}^2),...,(\bm{x}^n, \bar{\bm{y}}^n)\}$; The testing data $\hat{\mathcal{D}}=\{\hat{\bm{x}}^1, \hat{\bm{x}}^2,..., \hat{\bm{x}}^n\}$. \\
\STATE 
\textbf{Initialization}: The weights $\bm{w}$ of predictor $g_w(\bm{x})$  are randomly initialized.\\
\textbf{Training stage:}
 \FOR{Epochs < $M_{\max}$ }
 \FOR{Each batch in $\mathcal{D}$ }
 \STATE{Predict the reward using the predictor $\hat{\bm{\lambda}}=g_w(\bm{x})$; }
  \STATE{Forward pass to compute the optimal decisions $p_t^*, b_t^*, t=1,2,...,T$ by equation (\ref{eq:arb});}
 \STATE{Compute the loss $\mathcal{L}_\epsilon(\hat{\bm{\lambda}},\bar{\bm{y}})$ by equation (\ref{eq:hybrid})};
  \STATE{Compute the total gradient by equation (\ref{sub:total})  and backward pass to update the weights $\bm{w}$ of predictor.};
 \\

\ENDFOR
\ENDFOR


\textbf{Prediction stage:}

\STATE{Predict the reward using the predictor $\hat{\bm{\lambda}}=g_w(\hat{\bm{x}})$; }

 \STATE{Forward pass to compute the optimal decisions $\hat{p}_t, \hat{b}_t, t=1,2,...,T$ by equation (\ref{eq:arb});}

\RETURN
The predicted energy storage behaviors  $(\hat{p}_t, \hat{b}_t)$ for $t=1,2,...T$. 

\end{algorithmic} 
\end{algorithm}

\section{Experiments}
The experiments are implemented in Python 3, Pytorch \cite{PGM19} and PyEPO package \cite{TK22} on a desktop with
3.0 GHz Intel Core i9, NVIDIA 4080 16 GB Graphic card, and  32 GB RAM. The optimization layer is implemented using the Groubi optimization package~\cite{gurobi}. The pre-built predictor is implemented in Keras \cite{keras} and Tensorflow \cite{AAB16}.  {The computational time for reward prediction and optimization is less than 0.1 seconds.}

The parameters for the energy storage arbitrage model are set as follows: power rating $P=0.5 MW$,  storage efficiency $\eta= 0.9$, initial SoC $e_0=0.5 MWh$, storage capacity $E=2 MWh$, prediction horizon $T=24$. The number of Monte-Carlo samples is $K=1$. 

 {In this paper, we employ two neural network models: a Long Short-Term Memory (LSTM) network and a Multi-Layer Perceptron (MLP) network. The LSTM model architecture includes a single LSTM layer, followed by two fully connected layers with a hidden dimension of 64. A ReLU activation function is applied after each fully connected layer. For the MLP model, the architecture consists of three fully connected layers, each with a hidden dimension of 96, with ReLU as the activation function. The Adam optimizer is used for model training. The learning rate is set as 0.01. The input features, such as demand and price data, are structured as time-series historical data, enabling the model to effectively capture temporal dependencies. The neural network outputs a time series of reward predictions, while the optimization layer generates energy storage decisions that align with the reward prediction horizon. In our experiments, we used a 24-hour look-back window as input, and the model predicts the reward and corresponding storage decisions for the subsequent 24-hour period.}

\subsection{Self-Scheduling Energy Storage Arbitrage}\label{exp:arb}

We conduct experiments focusing on Behind-the-Meter energy storage arbitrage, where the energy storage system autonomously manages its scheduling based on future price predictions at hourly real-time prices~\cite{CZZ17}. The reward forecasting horizon spans 24 hours and is updated hourly.  Therefore, the corresponding energy storage schedule is dynamically adjusted on an hourly basis. {We collect data from the New York Independent System Operator (NYISO), using data from 2017-2020 for training and 2021 for testing.
The historical data for Real-Time Price (RTP), Day-Ahead Price (DAP), and load data are collected. The data resolution is one sample per hour. The dataset includes both input features and target decision values. The target decisions are obtained by solving \eqref{eq:arb} using ground truth RTP. The input features consist of three components: RTP, DAP, and load data. These input features are typical for training a price forecaster. To augment the training samples in the dataset, a rolling horizon window with a step size of one hour is employed. The input features of the training dataset are structured as a tensor of size 
$35017\times 24 \times 3$, where 35017 represents the number of samples, 24 corresponds to hourly intervals, and 3 denotes the number of feature types. The target decision data is represented as a 
$35017 \times 24$ matrix. The testing dataset follows the same format, comprising 8713 samples.}

The arbitrage profit is computed as
\begin{equation}
    \textstyle\sum_{t=1}^T \bar{\lambda}_{t} (p_t^*-b_t^*)-u(p^*_t,b_t^*)
\end{equation}
\noindent where $\bar{\lambda}_t$ denotes the real-time price at time $t$; $p_t^*$ and $b_t^*$ denote the scheduled discharge and charge at time $t$, respectively. 

\begin{figure}[!ht]
  \centering
   \subfigure[]{ \includegraphics[width = .8\linewidth]{./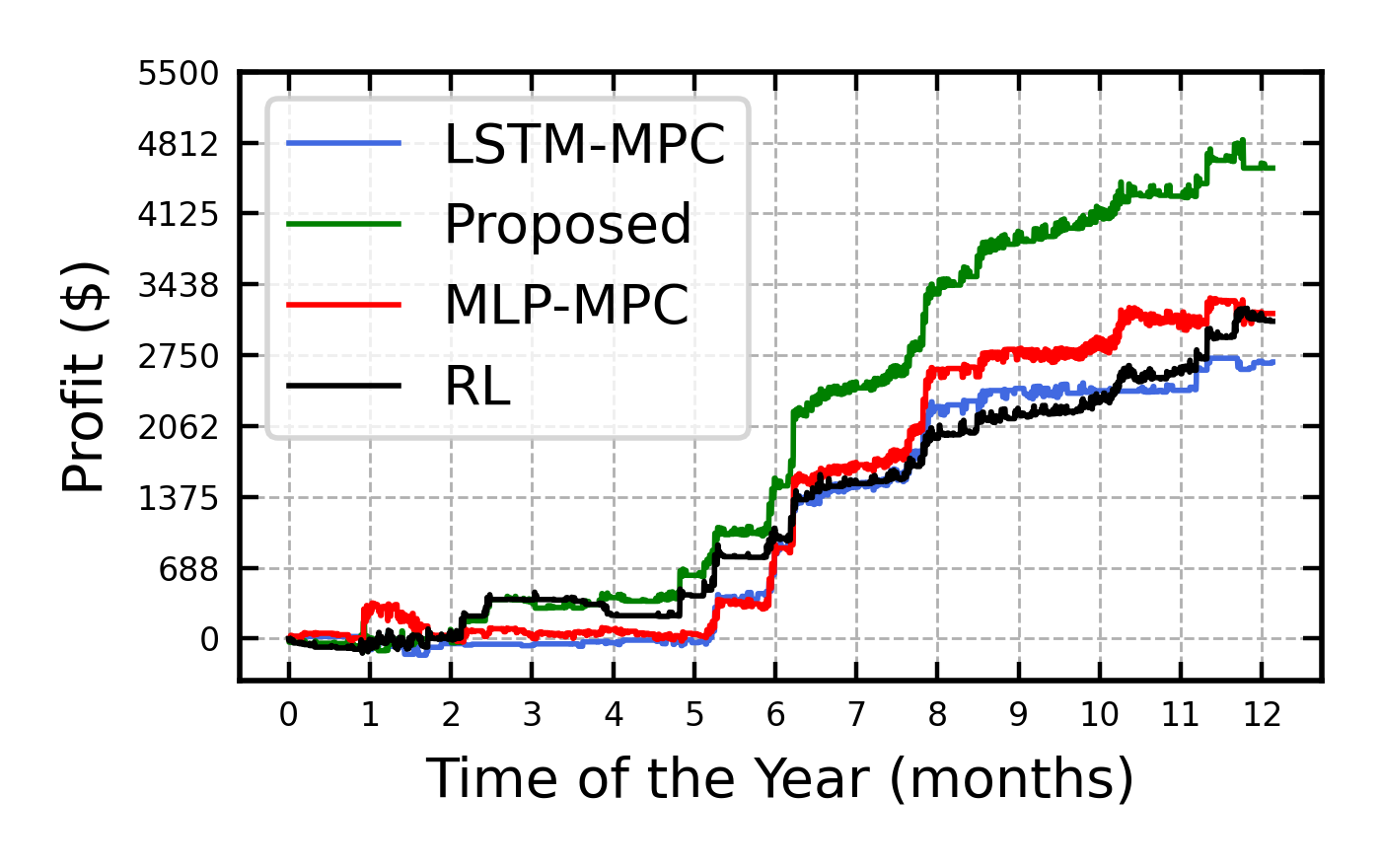}}
  \subfigure[]{ \includegraphics[width = .8\linewidth]{./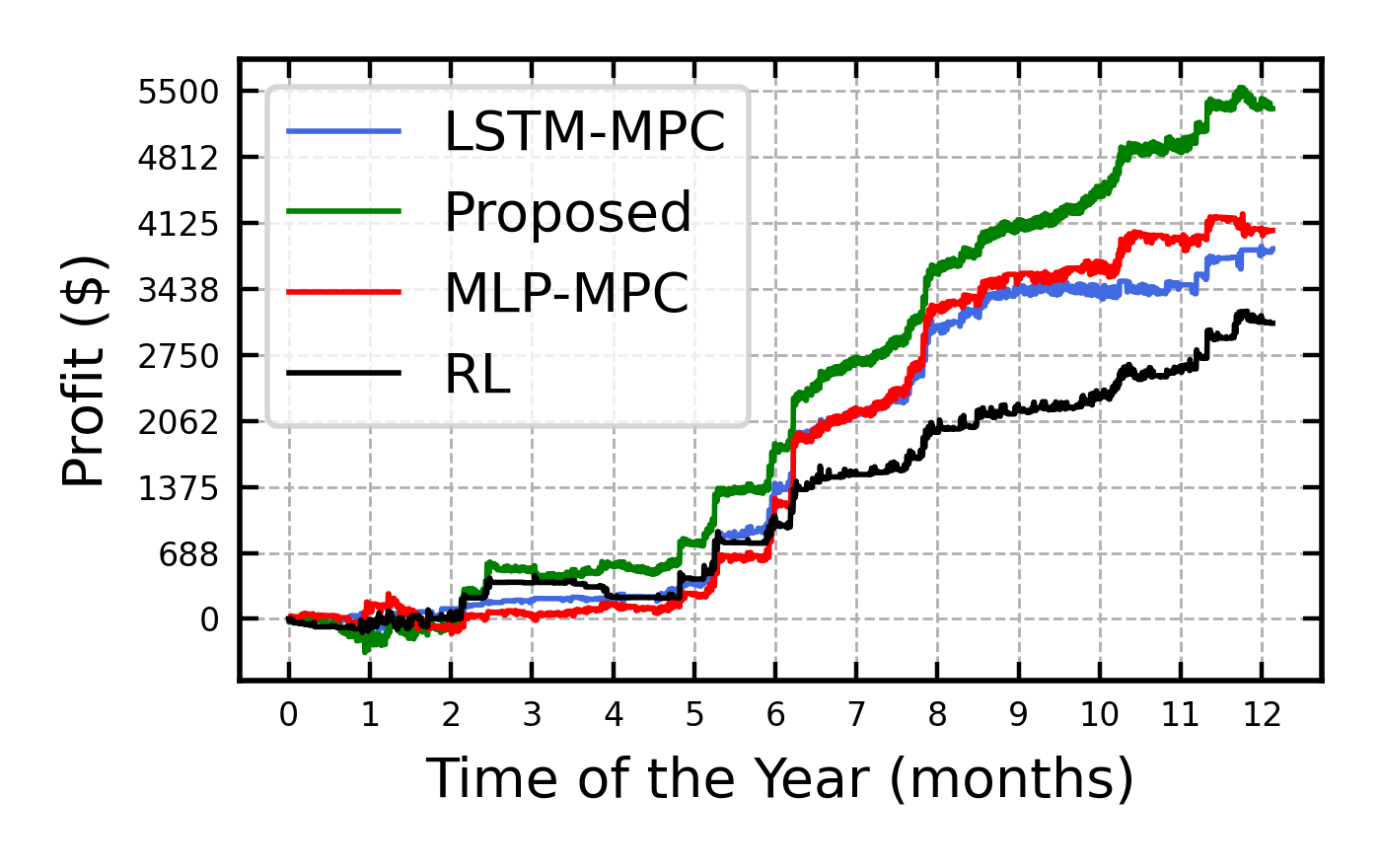} }
    	\caption{{Comparison of annual accumulative profits between the proposed approach and three benchmark methods. The energy storage model 
 is with (a) linear cost term, (b) linear and quadratic cost terms.}} \label{Fig: comp_arb1}
\end{figure}

{The goal of this experiment is to compare the proposed decision-focused approach with several data-driven benchmark approaches to demonstrate that training with our decision-focused loss leads to increased profit. We compare the cumulative profit over the entire year of 2021 and show the performance improvements achieved by our method.} We employ the MLP as our prediction layer. We compare the arbitrage performance of the proposed method with two MPC-based benchmark methods. The first benchmark employs an LSTM network for real-time price prediction, while the second employs an MLP network. For brevity, we denote these two methods as ``LSTM-MPC'' and ``MLP-MPC'', respectively. {We also include a benchmark method based on reinforcement learning (RL) \cite{WZ18}, which employs a Markov decision process (MDP) framework by discretizing the SoC. The original RL approach assumes perfect energy storage efficiency ($\eta = 1$). To ensure a fair comparison, we revised the RL approach by incorporating the SoC transition into the state transition of the RL framework and set the storage efficiency of all methods to 0.9.}
We test two types of energy storage models to evaluate the performance of our proposed method: one with a linear cost term and the other with both linear and quadratic cost terms. The linear cost term is $u(p_t,b_t)=C_1p_t$, where
the coefficient of discharge cost is $C_1= 10$. The linear and quadratic cost terms are $u(p_t,b_t)=C_1p_t+C_2p_t^2$, where
the coefficients of discharge cost are $C_1= 5$ and $C_2=5$.  

Note that the self-scheduling energy storage arbitrage setting in this paper is more challenging than the economic bidding or price response settings in \cite{BZX23}. Economic bidding assumes that energy storage can submit a bid, and the market clears it based on the price at time $t$.  Price response assumes that the price at the time step $t$ can be observed before making decisions at $t$. However, self-scheduling arbitrage in this paper requires making decisions at $t$ before the real-time price at $t$ is published. Consequently, this self-scheduling setting encounters more uncertainties and leads to lower profits than the other two settings.
 The accumulative arbitrage profits are plotted on a yearly basis.  
 
Fig. \ref{Fig: comp_arb1} shows the arbitrage performance of the proposed method and the three benchmark methods with the two considered cost terms. {The mean absolute error (MAE) of the MLP model is 10.83, while the MAE of the LSTM model is 9.31. Although the lower MAE of the LSTM  indicates better overall prediction accuracy compared to the MLP, the MLP model performs better in terms of profit. This is because the MLP predicts the price peak timing more accurately, a crucial factor for optimizing energy storage arbitrage.} The proposed method consistently outperforms the three benchmark methods.  
Our proposed method demonstrates a performance improvement of approximately $47\%$, $71\%$, and $50\%$ over the three benchmark methods with linear cost and 30\%, 38\%, and 72\% with linear and quadratic cost. 
Our end-to-end pipeline, which incorporates the optimization model and focuses on decision error, contributes to its profit improvement. The results in the two sub-figures also demonstrate that our proposed method is robust across different energy storage cost models.

{We vary the energy storage efficiency and show the corresponding profits in Table \ref{table:arb}. The profits decrease as the efficiency decreases for all three methods. The proposed method achieves 5\%-20\% profit improvement over the two benchmark methods when $\eta=1$. In contrast, for $\eta=0.85$, the profits of the proposed method are more than double those of the two benchmark methods.
Unlike the two benchmark methods, which treat prediction and optimization as two separate stages, the proposed method incorporates the SoC evolution into the end-to-end pipeline.  The results in Table \ref{table:arb} demonstrate that the proposed method is more robust to varying storage efficiencies, highlighting its superiority.}

 \begin{table}[]
 \centering
 \caption{The comparison of accumulative profits with different energy storage efficiency with linear cost.}  \label{table:arb}
\begin{tabular}{lllll}
 \toprule
$\eta$      & 1    & 0.95 & 0.9 & 0.85 \\
\hline 
Proposed & 9126 & 6381 &  4589   & 2990 \\
MLP-MPC      & 8388 & 5545 &   3121  & 1332 \\
LSTM-MPC     & 6910 & 4554 &   2583  & 1233 \\
\bottomrule
\end{tabular}
\end{table}


{In Fig.~\ref{Fig: comp_visual}, we present the predictions of the proposed method, LSTM, and MLP models, and the ground truth real-time price (RTP) over two consecutive days. We also show the corresponding energy storage decisions based on these predictions. As shown in Fig.~\ref{Fig: comp_visual}, while the reward predictions from all three methods exhibit similar trends, our method provides more accurate timing of price spikes, resulting in the best decision-making outcomes. The energy storage behavior from our method aligns closely with the optimal decisions. In contrast, the decisions generated by the LSTM and MLP models deviate from the optimal decisions and show larger errors. The profits for this segment are $\$363$, $\$220$, and $\$204$ for our method, MLP method, and LSTM method, respectively.}

\begin{figure}[!ht]
  \centering
{ \includegraphics[width = 1\linewidth]{./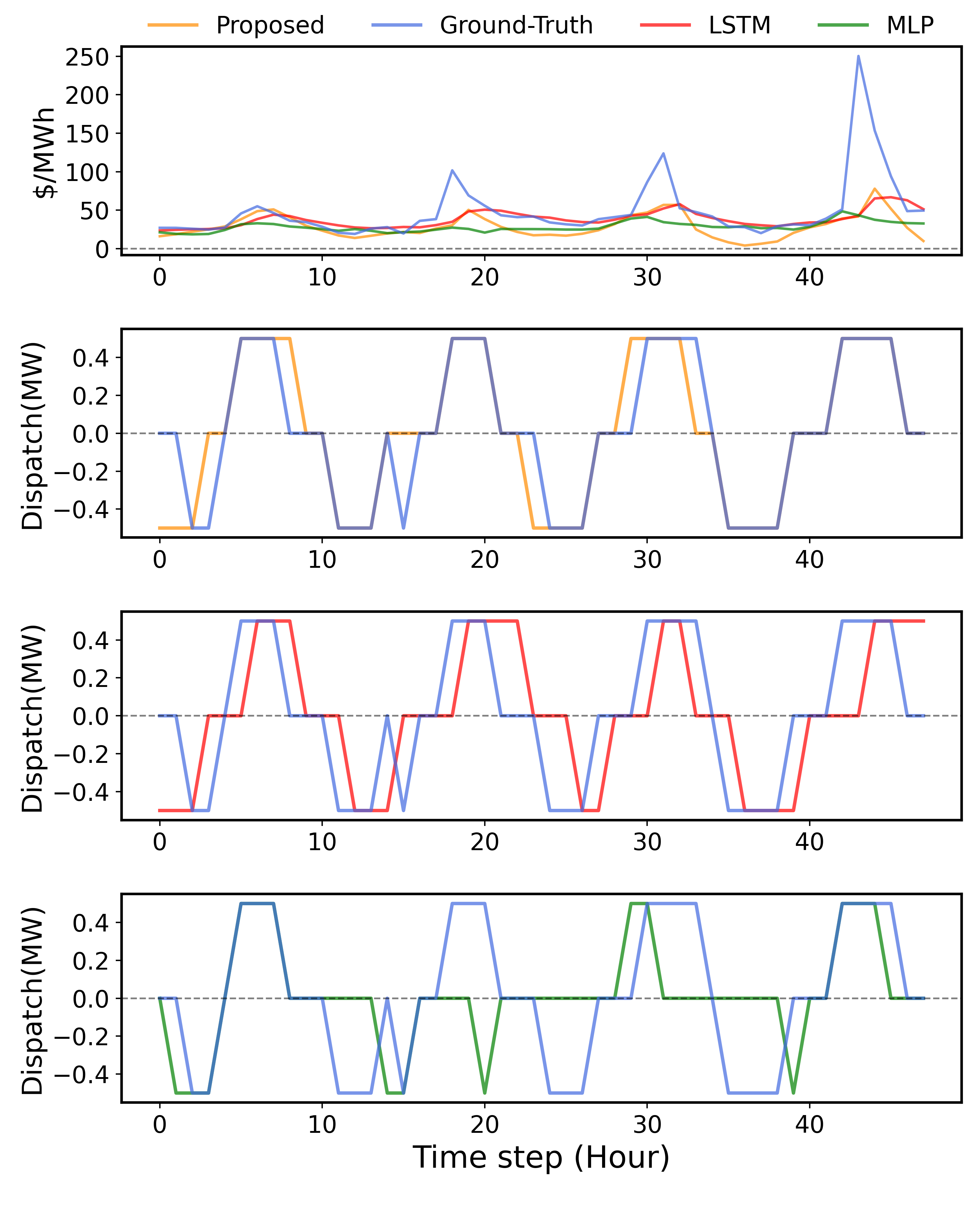} }
\caption{{Comparison of reward predictions and corresponding decisions between the proposed approach, two benchmark methods, and ground truth. The first subfigure shows the reward predictions and the ground truth real-time price (RTP). The subsequent subfigures show the corresponding decisions compared to the optimal decisions.}} \label{Fig: comp_visual}
\end{figure}

{To assess the adaptability of the proposed method to a new energy storage system, we employ an energy storage system with parameters $\eta = 0.95$, $E = 1$, $P = 0.5$, $C_1 = 10$, and $C_2 = 0$. A reward predictor is trained using the optimal decisions generated by this energy storage model. The trained predictor is then applied to a different model with parameters $\eta = 0.9$, $E = 2$, $P = 0.5$, $C_1 = 5$, and $C_2 = 5$. The arbitrage profit achieved through adaptation is \$4,783, compared to \$5,314 from the original model. We abbreviate the proposed method, trained on a storage system with different parameters, as ``Proposed-Adapt.'' The performance comparison in Fig.~\ref{Fig: comp_adapt} shows that the algorithm adapts effectively to the new model, with only a minor reduction in profit. This demonstrates the robustness of the trained predictor in handling variations in system parameters while maintaining competitive performance.}

\begin{figure}[!ht]
  \centering
{ 
   \includegraphics[width = 1\linewidth]{./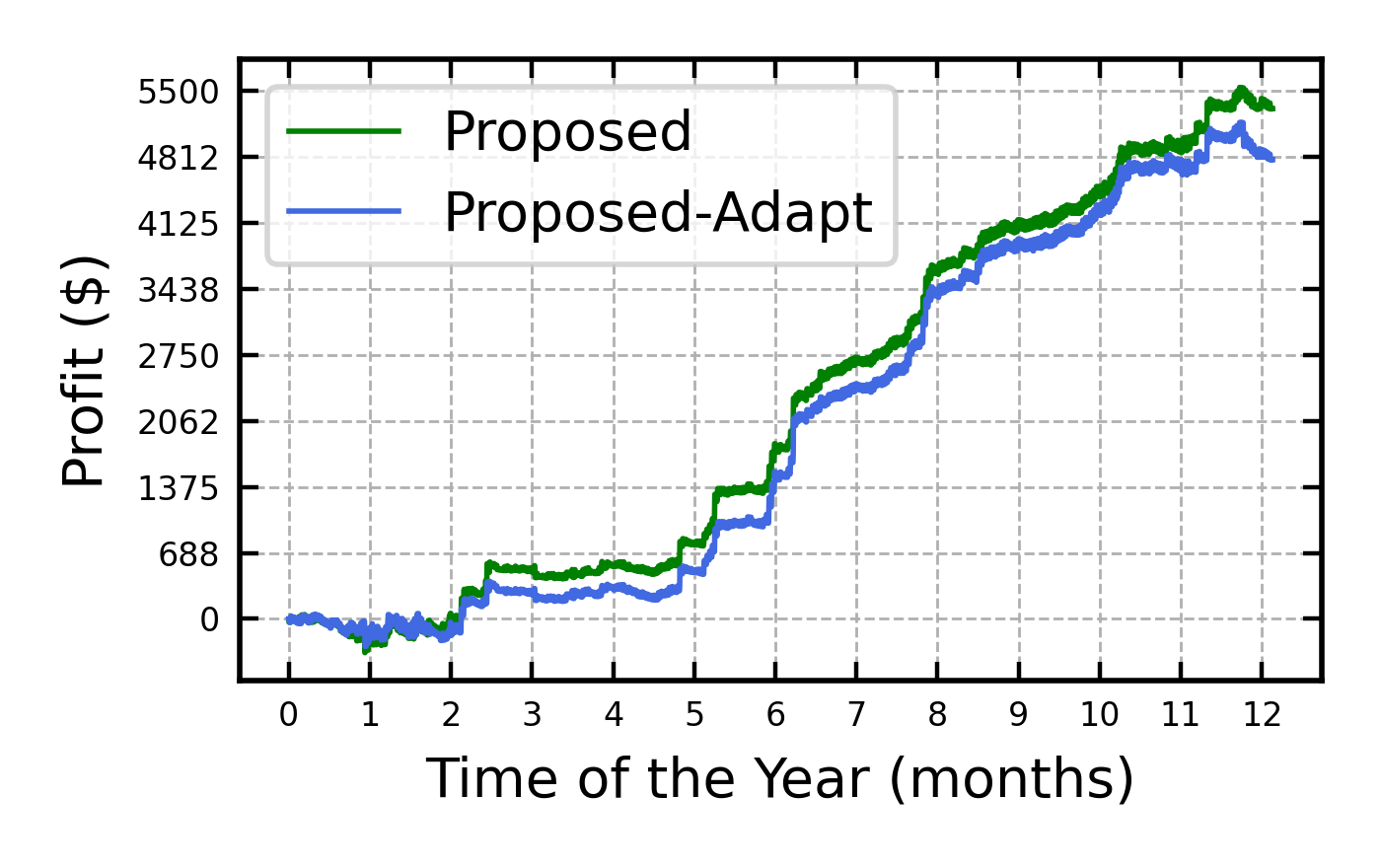} }
    	\caption{{Arbitrage performance based on training on the original energy storage system and on a storage system with different parameters.}}
\label{Fig: comp_adapt}
\end{figure}
\subsection{Energy Storage Behavior Prediction}
{The prediction of energy storage behavior involves predicting the charge and discharge patterns of an energy storage unit based on observed energy storage actions and input features.}
To evaluate the performance of our proposed algorithm, we compared it with two benchmark prediction methods. The first benchmark algorithm trains an LSTM network to learn a nonlinear mapping between input features and charge/discharge decisions. This mapping resembles the one in \eqref{eq:mapping_nn} but removes the optimization layer. It can be expressed as,
\begin{equation}\label{eq:mapping_nn2}
\bm{x} \mapsto \bm{y} = g_w(\bm{x})
\end{equation}

We denote the first algorithm as its neural network type ``LSTM'' for brevity. The second algorithm is a two-stage approach and assumes a pre-built predictor. After predicting real-time price, the price predictions are fed into the storage optimization problem defined in equation (\ref{eq:arb}). Due to its two-stage nature, we abbreviate this method as ``Two-Stage.'' 

{The goal of this experiment is to demonstrate that the proposed decision-focused framework improves the accuracy of predicting energy storage behavior. The LSTM method does not incorporate the energy storage model during training. The two-stage method separates prediction and energy storage optimization into separate parts, training the prediction model solely to minimize the price prediction error without considering the observed charge/discharge behaviors. Our proposed method integrates the energy storage model into the machine learning pipeline, minimizing decision loss with respect to observed charge/discharge actions. We evaluate prediction performance over a full year to demonstrate the effectiveness of our approach.}

\textit{Evaluation Metric.} 
We employ principles of pattern recognition to assess the efficacy of storage behavior prediction models. Our motivation stems from the unique characteristics of storage operations, which are inherently bi-directional—alternating between charging and discharging phases—and marked by the action's sparsity, where the output of the storage remains at zero for extended durations. Traditional evaluation metrics, including mean-square error and correlation coefficients, fall short of accurately capturing the prediction quality due to these distinct operational dynamics.

We assign the label 1 to the discharge decision, -1 to the charge decision, and 0 to the standby decision.  Predicted values are classified as 1 if above the threshold, -1 if below the negative threshold, and 0 if within the range between the negative threshold and the threshold. The threshold is set at 10\%-20\% of the power rating. In the synthetic experiment, we set the threshold to 0.05; in the real-world experiment, we set it to 0.2. We consider the time delays, and if the prediction and the ground truth differ by at most 2 hours, we still consider it correctly classified. The confusion matrix with this criterion is denoted as the ``event-based confusion matrix.'' The confusion matrix is defined in Table \ref{conf_matrix}, where TP is true positive; TN is true negative; FN is false negative; FP is false positive. 

\begin{table}[H]
  \caption{The confusion matrix of energy storage behavior.}
  \centering
  \label{conf_matrix}
  \begin{tabular}{cccc}
    \toprule
  Actual | Prediction & -1&0 &1\\
    \midrule
    -1 & TP& FN &FP\\
   0 & FP & TN&FP\\
    1 & FP & FN& TP\\
  \bottomrule
\end{tabular}
\end{table}

Due to the event-based criterion not considering the magnitude of the prediction, we also introduce a magnitude-based criterion: if the prediction error for charge/discharge falls within a certain percentage of magnitude of the actual value, it is classified as 1 or -1, respectively; otherwise, it is labeled as 0. In the synthetic experiment, we set the percentage at 20\%; in the real-world experiment, we set it at 40\%. The time delays are also accounted for in the magnitude-based criterion. We denote the confusion matrix with this criterion as the ``magnitude-based confusion matrix.'' 
The evaluation metrics are defined as:
\begin{subequations}
\begin{align}
    \text{Precision} &= \mathrm{\frac{TP}{TP+FP}}\\
    \text{Accuracy} &= \mathrm{\frac{TP+TN}{TP+TN+FP+FN}}\\
    \text{Recall} &= \mathrm{\frac{TP}{TP+FN}}\\
    \text{F1 score} &= \mathrm{\frac{2\times Precision \times Recall}{Precision+Recall}}
\end{align}
\end{subequations}


\subsubsection{Storage Behavior Prediction with Synthetic Data}
{The historical data for RTP, DAP, and load data are collected from NYISO from 2019 to 2022. We use two years for training and one year for testing. The data resolution is one sample per hour. The dataset collected for this study comprises both input features and target decision values. Similar to the application of energy storage arbitrage, we select typical input features for a price forecaster, including three types: RTP, DAP, and load data. The target decisions are the historical energy storage charge/discharge actions. The input features are structured as a $730 \times 24 \times 3$ tensor, where 730 represents the number of samples, 24 corresponds to 24 hourly intervals, and 3 denotes the feature types. The target decision data is structured as a $730 \times 24$ matrix. The testing dataset follows the same format and contains 365 samples.} We employ an LSTM network as our prediction layer. The energy storage arbitrage model in \eqref{eq:arb} generates the energy storage behaviors. The governing reward, i.e., the synthetic price prediction, $\hat{\bm{\lambda}} \in \mathbb{R}^{T}$ is generated by:

\begin{equation}
    \hat{\lambda}^g_{i,t} = \alpha_{i,t}* \lambda^{DA}_{i,t}+(1-\alpha_{i,t})* \lambda^{RT}_{i,t}+ \zeta_{i,t}
\end{equation}

\noindent where $\lambda^{DA}_{i,t}$ and $\lambda^{RT}_{i,t}$ represent the DAP and RTP at time step $t$ of day $i$, respectively.  $\zeta_{i,t}$ is the Gaussian noise generated from $\mathcal{N}(0,1)$. $\alpha_{i,t}$ is the parameter that controls the trade-off similarity between DAP and RTP. Note that this reward generation method is motivated by the fact that the expectation of RAP converges to DAP~\cite{tang2016model}. We randomly select $\alpha_{i,t}$ from (0.5,1) in our experiment. The two-stage method constructs a predictor for RTP based on the same historical data and then incorporates the prediction into the arbitrage model.

\begin{figure}[!ht]
  \centering
   \includegraphics[width = 1\linewidth]{./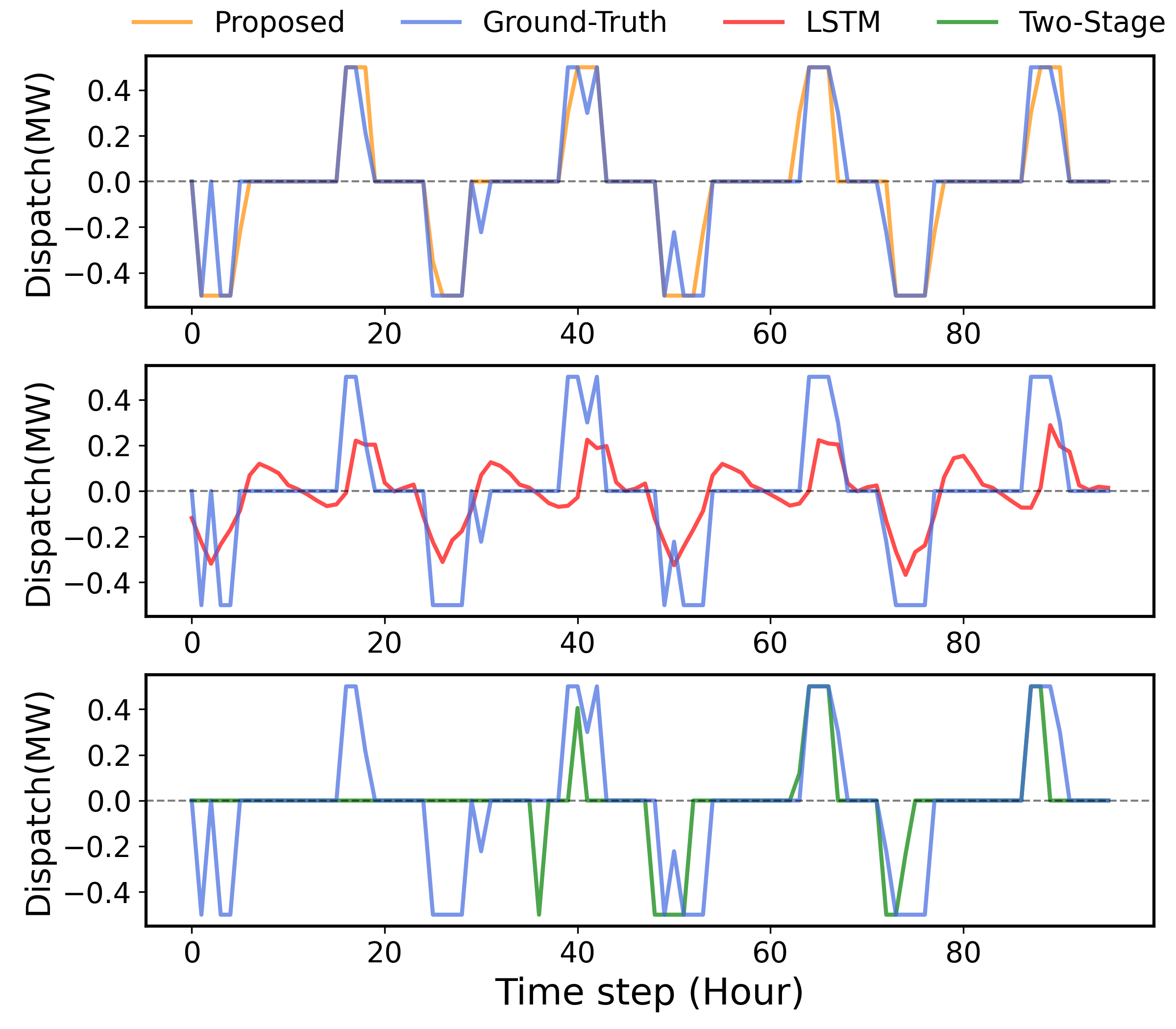} 
    	\caption{Comparison of ground-truth and predictions from the proposed approach and two benchmark methods: storage model with linear cost.} \label{Fig: comp1}
\end{figure}

We evaluate the performance of our proposed method using {energy storage behaviors} with two cost terms. The settings for these cost terms are consistent with those described in Section \ref{exp:arb}. The hyperparameter $\beta$ is set as $\beta=0.001$. Fig. \ref{Fig: comp1} and  Fig. \ref{Fig: comp2}  show the prediction performance of the proposed method and two benchmark methods.  The ground-truth behaviors and the predictions in four consecutive days are plotted. The proposed method accurately predicts storage behaviors regarding time steps and magnitudes. In contrast, the LSTM method can predict the timing of charge/discharge decisions but struggles to learn power rating information from historical data. Additionally, it mispredicts charge/discharge behaviors during standby periods when energy storage remains inactive. The two-stage method can capture the power rating but predicts a shorter duration.  Additionally,  it fails to predict energy storage activities on the first two days. {We show the price predictions from the two-stage method with the actual predictions in Fig.~\ref{Fig: twostage}. As shown in Fig.~\ref{Fig: twostage}, there are deviations between the two-stage predictions and the actual predictions. While the two-stage method's predictions follow a similar overall trend to the actual predictions, the two-stage model underestimates the price spread during the first two days. Because the predicted price spread from the two-stage method is lower than the degradation cost, the storage system remains idle.}

\begin{figure}[!ht]
  \centering
    \subfigure[]{ 
   \includegraphics[width = 1\linewidth]{./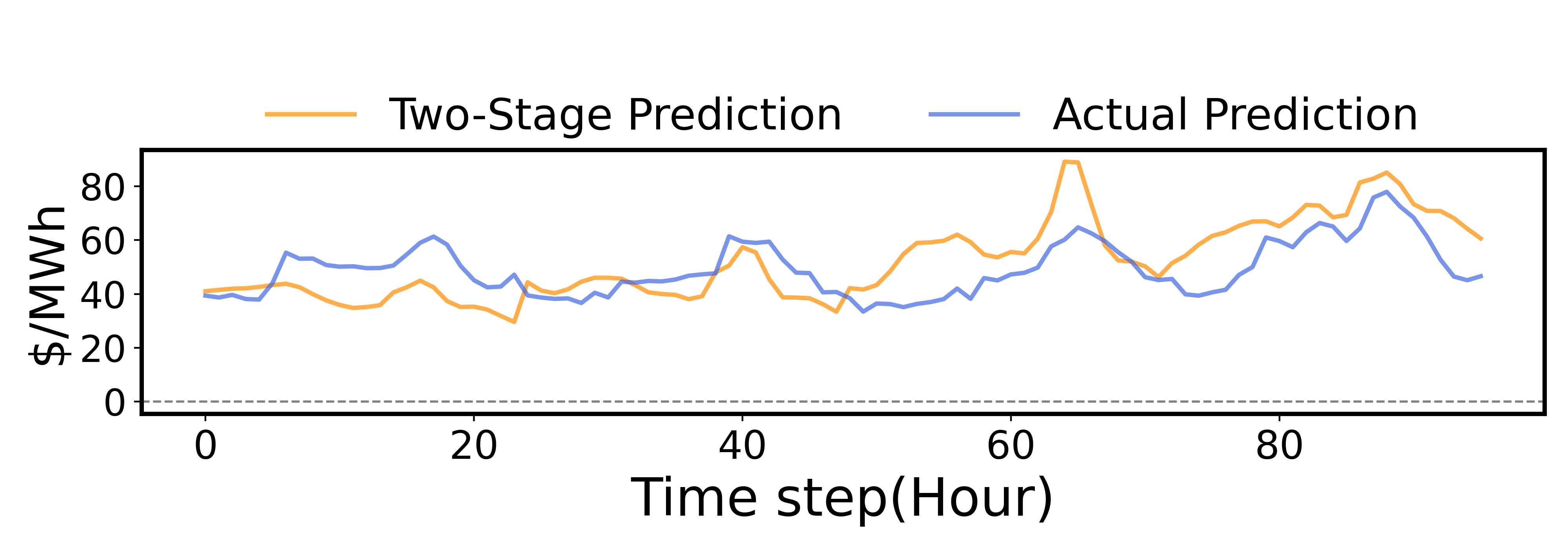} }
     \subfigure[]{ 
   \includegraphics[width = 1\linewidth]{./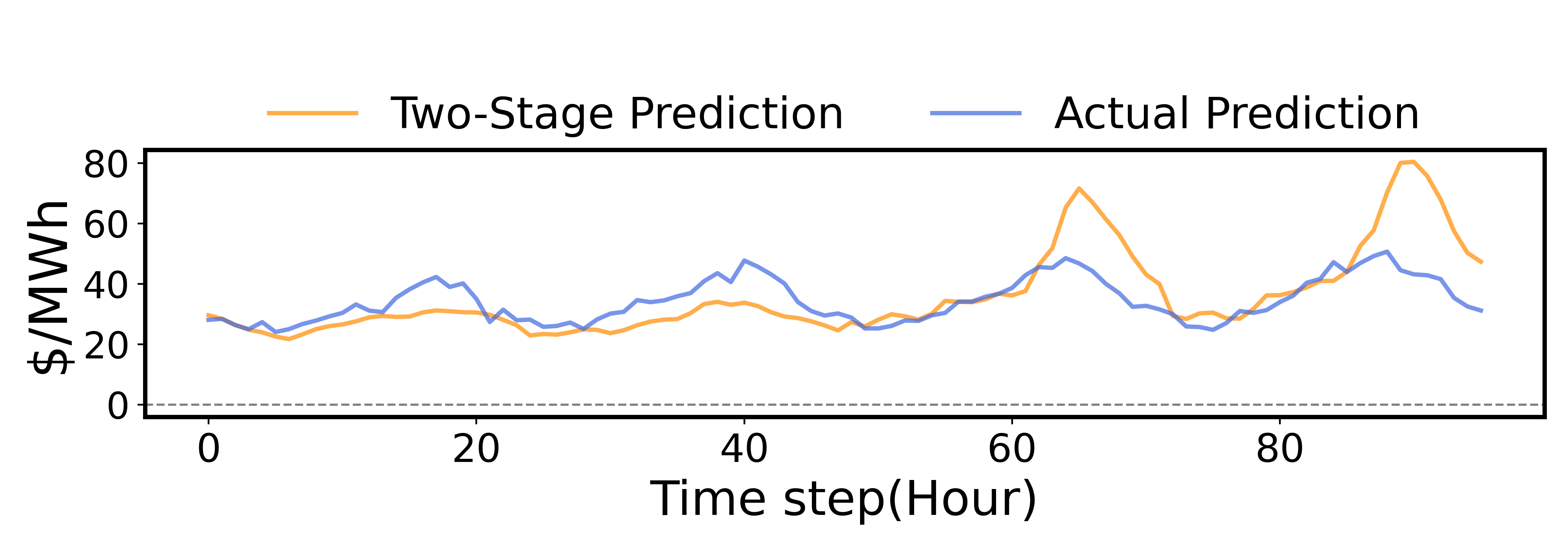}} 
    \caption{{Price prediction of the two-stage method and the actual prediction. (a) Price prediction of two-stage method in Fig.~\ref{Fig: comp1}; (b) Price prediction of two-stage method in Fig.~\ref{Fig: comp2}.}}
\label{Fig: twostage}
\end{figure}

\begin{figure}[!ht]
  \centering
   \includegraphics[width = 1\linewidth]{./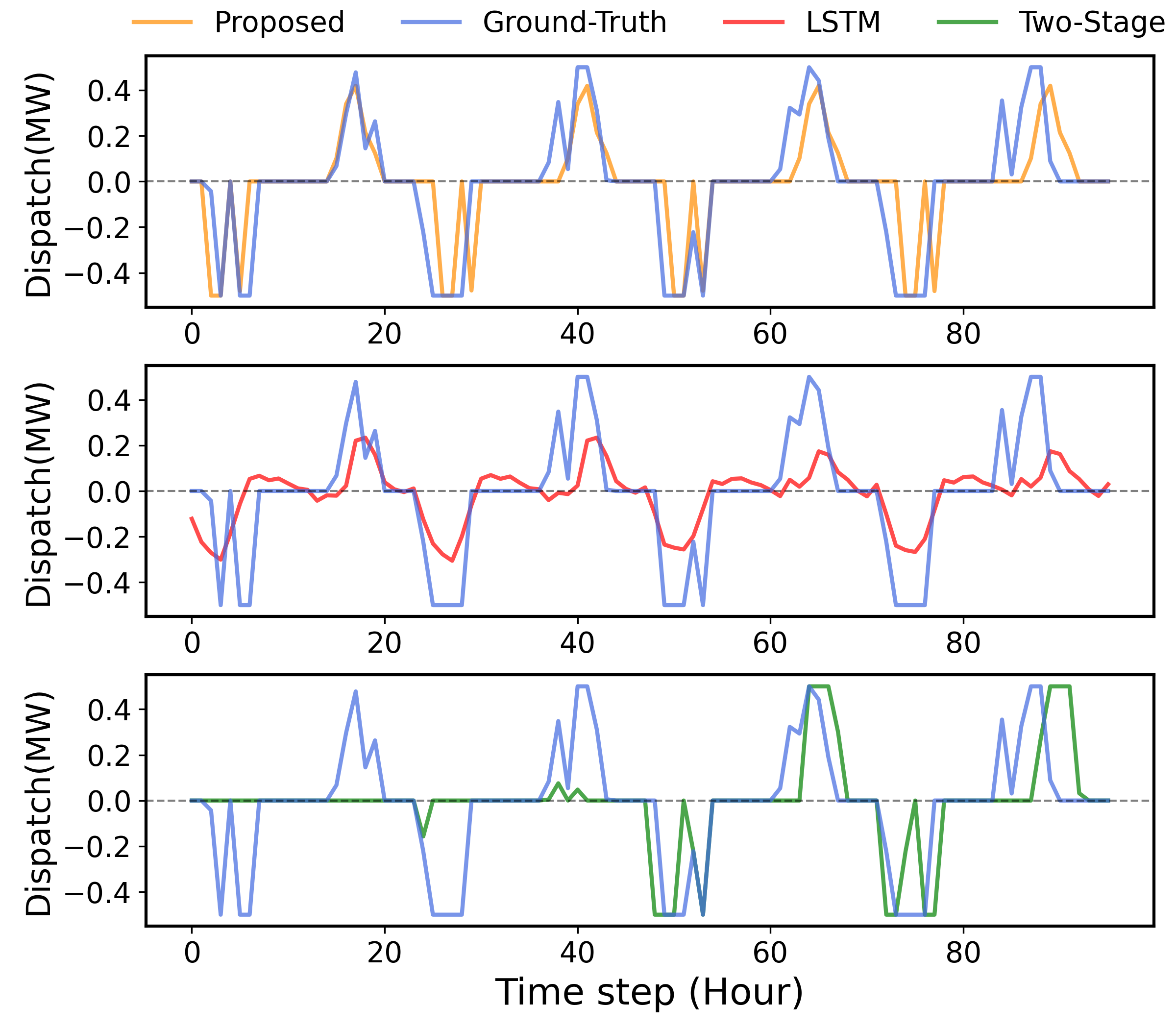} 
    	\caption{Comparison of ground-truth and predictions from the proposed approach and two benchmark methods: storage model with linear and quadratic cost.} \label{Fig: comp2}
\end{figure}

The total dispatch energy of ground-truth, the proposed method and two comparison methods is shown in Fig. \ref{Fig:barcomp1}(a) and (b). Our proposed method accurately predicts the charge and discharge energy while the two benchmark methods have higher errors.  This demonstrates the superiority of our approach.

Table \ref{exp1:conf_matrix1} and Table \ref{exp2:conf_matrix1} list the number of True Positives (TP), True Negatives (TN), False Positives (FP), and False Negatives (FN) along with two types of confusion matrices. The event-based confusion matrix shows that the proposed method achieves the highest precision and accuracy. While the proposed method has lower recall than LSTM, it achieves much higher precision. The F1 score, which balances precision and recall, indicates that our method outperforms others overall. The magnitude-based confusion matrix employs a more strict criterion. In both tables, the performance of the LSTM method degrades significantly under this criterion. In contrast, the F1 score of the proposed method experiences only a marginal decrease. This indicates that the proposed method is more adept at accurately predicting the magnitude. The F1 scores in both tables demonstrate that the proposed method outperforms the other two methods.

\begin{figure*}[!ht]
  \centering
\subfigure[]{ \includegraphics[width = 0.3\linewidth]{./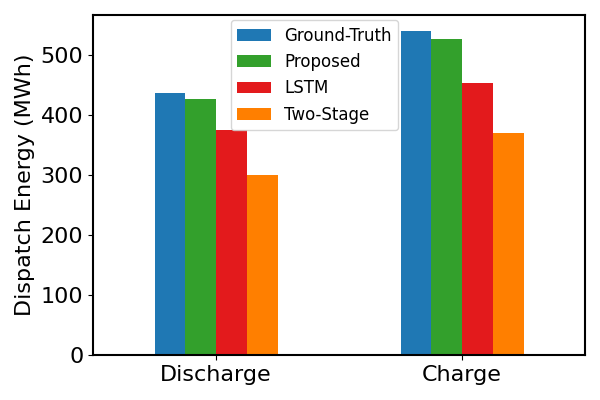}  }
  \subfigure[]{  \includegraphics[width = 0.3\linewidth]{./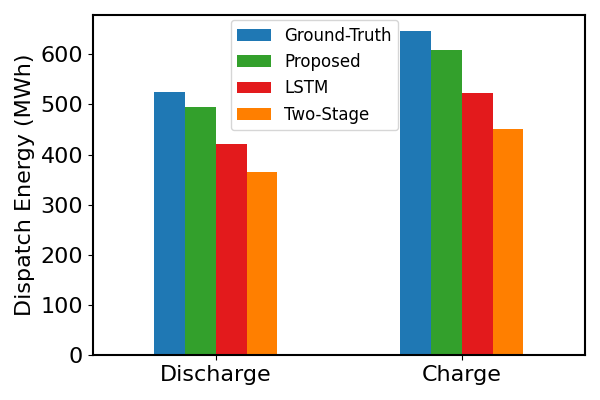} } 
\subfigure[]{  \includegraphics[width = 0.3\linewidth]{./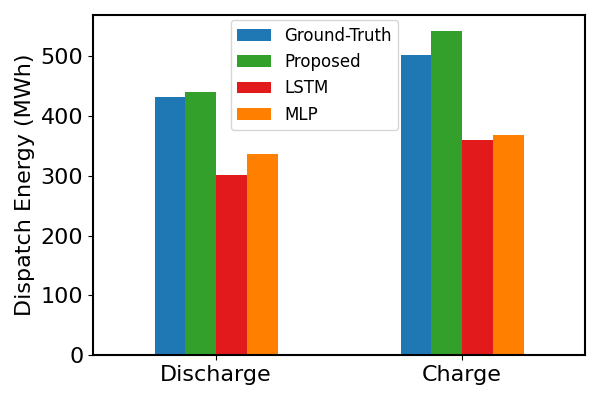}}      
    	\caption{Energy dispatch of ground-truth, the proposed approach and two benchmark methods: {energy storage behavior} with (a)  a linear cost term, (b)  linear and quadratic cost terms, and (c) real-world {energy storage behavior}.}\label{Fig:barcomp1}
\end{figure*}

\begin{table*} 
  \caption{Comparison of the proposed approach and two benchmark methods,  {storage behavior} with linear cost term.}
  \label{exp1:conf_matrix1}
    \resizebox{\textwidth}{!}{
  \begin{tabular}{c|cccccccc|cccccccc}
    \toprule
     & \multicolumn{8}{c}{Event-based confusion matrix}&\multicolumn{8}{c}{Magnitude-based confusion matrix}\\
     \midrule
  Metrics & TP&TN& FP &FN&Precision& Accuracy & Recall & F1 score& TP&TN& FP &FN&  Precision& Accuracy & Recall & F1 score\\
    \midrule
   Prop      & 1463 & 5878 & 672  & 723  & 68.52\% & 84.03\% & 66.93\% & 67.72\% & 1235 & 5808 & 742  & 951  & 62.47\% & 80.62\% & 56.50\% & 59.33\% \\
  LSTM      & 1987 & 4166 & 2418 & 165  & 45.11\% & 70.43\% & 92.33\% & 60.61\% & 132  & 4004 & 2580 & 2020 & 4.87\%  & 47.34\% & 6.13\%  & 5.43\%  \\
Two-Stage & 943  & 6015 & 548  & 1230 & 63.25\% & 79.65\% & 43.40\% & 51.47\% & 787  & 6034 & 529  & 1386 & 59.80\% & 78.08\% & 36.22\% & 45.11\%\\
  \bottomrule
\end{tabular}}
\end{table*}


  

  


\begin{table*} 
  \caption{Comparison of the proposed approach and two benchmark methods,  {storage behavior} with linear and quadratic cost term.}
  \label{exp2:conf_matrix1}
  \resizebox{\textwidth}{!}{
  \begin{tabular}{c|cccccccc|cccccccc}
    \toprule
   & \multicolumn{8}{c}{Event-based confusion matrix}&\multicolumn{8}{c}{Magnitude-based confusion matrix}\\
     \midrule
  Metrics & TP&TN& FP &FN&Precision& Accuracy & Recall & F1 score&TP&TN& FP &FN&Precision& Accuracy & Recall & F1 score\\
    \midrule
   Prop      & 2327 & 4796 & 1018 & 595  & 69.57\% & 81.54\% & 79.64\% & 74.26\% & 1252 & 4944 & 870  & 1670 & 59\%    & 70.92\% & 42.85\% & 49.64\% \\
LSTM      & 2577 & 3647 & 2242 & 270  & 53.48\% & 71.25\% & 90.52\% & 67.23\% & 228  & 3518 & 2371 & 2619 & 8.77\%  & 42.88\% & 8.01\%  & 8.37\%  \\
Two-Stage & 1430 & 5195 & 629  & 1482 & 69.45\% & 75.84\% & 49.11\% & 57.53\% & 906  & 5198 & 626  & 2006 & 59.14\% & 69.87\% & 31.11\% & 40.77\% \\
  \bottomrule
\end{tabular}}
\end{table*}


\begin{table*} 
  \caption{Comparison of the proposed approach and two benchmark methods on real-world data.}
  \label{exp3:conf_matrix}
  \resizebox{\textwidth}{!}{
  \begin{tabular}{c|cccccccc|cccccccc}
    \toprule
       & \multicolumn{8}{c}{Event-based confusion matrix}&\multicolumn{8}{c}{Magnitude-based confusion matrix}\\
     \midrule
  Metrics & TP&TN& FP &FN &  Precision& Accuracy & Recall & F1 score & TP&TN& FP &FN&  Precision& Accuracy & Recall & F1 score\\
    \midrule
Prop      & 792 & 4252 & 352 & 364 & 69.23\% & 87.57\% & 68.51\% & 68.87\% & 602 & 4272 & 332 & 554 & 64\%    & 84.62\% & 52.08\% & 57.61\% \\
LSTM      & 609 & 4297 & 293 & 561 & 67.52\% & 85.17\% & 52.05\% & 58.78\% & 345 & 4299 & 291 & 825 & 54.25\% & 80.62\% & 29.49\% & 38.21\% \\
MLP& 764 & 4024 & 580 & 392 & 56.85\% & 83.12\% & 66.09\% & 61.12\% & 367 & 4060 & 544 & 789 & 40.29\% & 76.86\% & 31.75\% & 35.51\%\\
  \bottomrule
\end{tabular}}
\end{table*}

\begin{table*} 
  \caption{{The performance of the proposed approach without assuming knowledge of the degradation cost, storage behavior with linear cost term and linear and quadratic cost term.}}
  \label{table:exp3}
    \resizebox{\textwidth}{!}{
  \begin{tabular}{c|cccccccc|cccccccc}
    \toprule
     & \multicolumn{8}{c}{Event-based confusion matrix}&\multicolumn{8}{c}{Magnitude-based confusion matrix}\\
     \midrule
  Metrics & TP&TN& FP &FN&Precision& Accuracy & Recall & F1 score& TP&TN& FP &FN&  Precision& Accuracy & Recall & F1 score\\
    \midrule
   Linear Cost &  1840& 5116& 1435& 345 & 56.18\%&  79.62\%&  84.21\%&  67.40\% & 1608& 5137&  1414& 577& 53.21\%&  77.21\%&  73.59\%& 61.76\%\\
  Linear and Quadratic     & 2197 & 4751 & 1082 & 706 & 67.00\% & 79.53\% & 75.68\% & 71.08\% &1623 & 4639& 1194& 1280 & 57.61\%& 71.68\% &55.91\%& 56.75\% \\
  \bottomrule
\end{tabular}}
\end{table*}

  

\begin{figure}[!ht]
  \centering
   \includegraphics[width = 1\linewidth]{./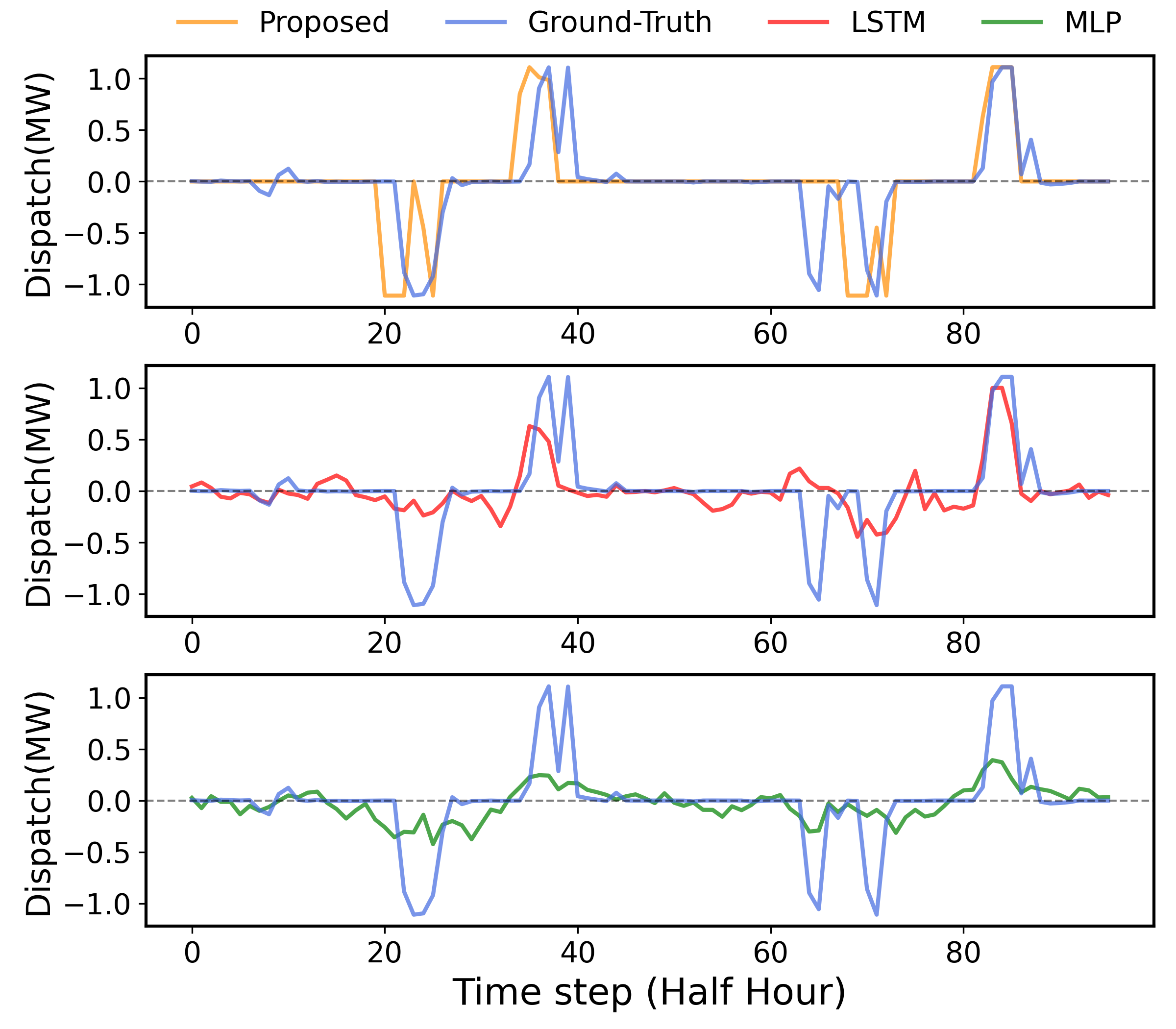} 
  
    	\caption{Comparison of ground-truth and predictions from the proposed approach and two benchmark methods: real-world data.} \label{Fig: comp3}
\end{figure}


{Note that the regulator can obtain degradation cost information through resource integration registration. Our method does not rely on prior knowledge of linear cost coefficients, and the quadratic cost has only a minor influence on prediction accuracy. The algorithm can interpret both linear and quadratic cost coefficients as part of the reward prediction. However, because the reward prediction is linearly multiplied with charge/discharge decisions in the objective function, the quadratic cost cannot be fully captured, resulting in a slight impact on prediction accuracy.  In such cases, the algorithm can leverage data-driven storage model identification techniques~\cite{BZZ23} to estimate the energy storage degradation quadratic cost term based on inferred pricing dynamics from our approach.}

{In Fig.~\ref{Fig: comp1} and Fig.~\ref{Fig: comp2},  $C_1=10$ for linear case and $C_1=5$ and $C_2=5$ for linear and quadratic case. In Fig.~\ref{Fig: compv}, we do not assume that the degradation cost is known in our optimization model. We set $C_1=0$ in our model for the linear cost case. For the linear and quadratic cost case, we set $C_1=0$ and $C_2=0$. The results, shown in the Fig.~\ref{Fig: compv} and Table~\ref{table:exp3}, indicate that the algorithm achieves similar performance to the case where the ground truth cost terms are assumed. The algorithm accurately predicts the energy storage model under a linear cost. For energy storage with linear and quadratic costs, our method correctly predicts the timing of charge and discharge, with only minor errors in discharge predictions. This demonstrates that our algorithm is not sensitive to prior knowledge of the cost function.}

\begin{figure}[!ht]
  \centering
  \subfigure[]{  \includegraphics[width = 1\linewidth]{./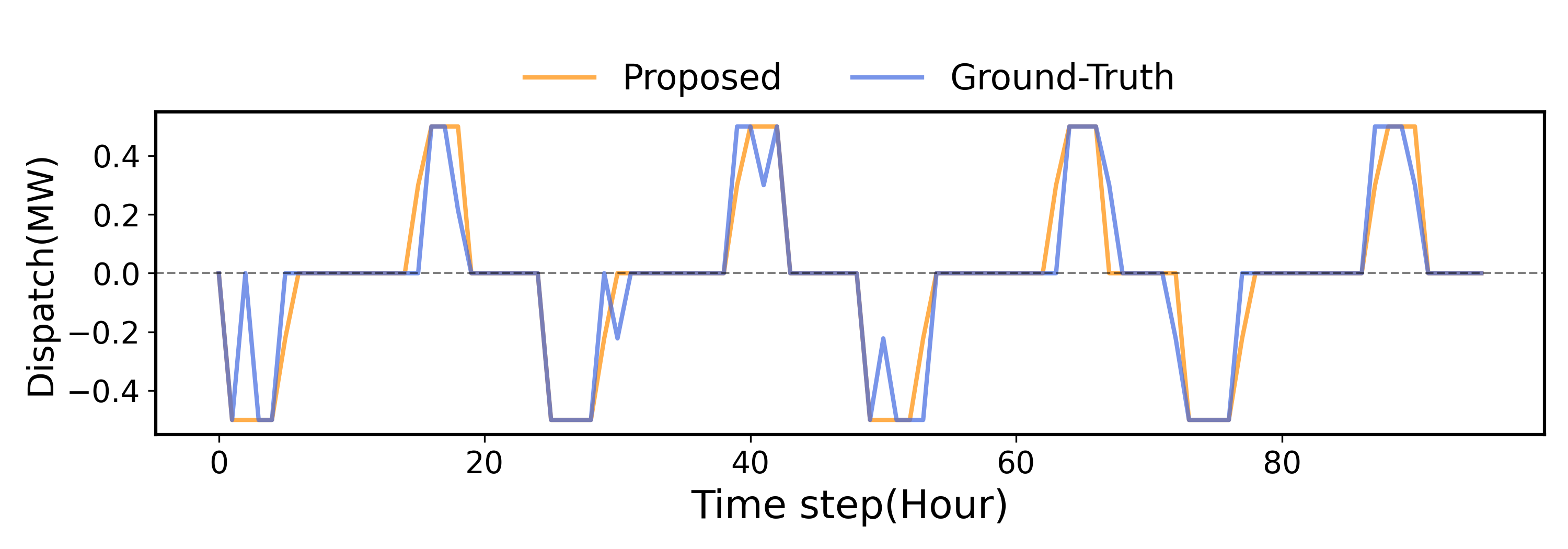} }
    \subfigure[]{ 
   \includegraphics[width = 1\linewidth]{./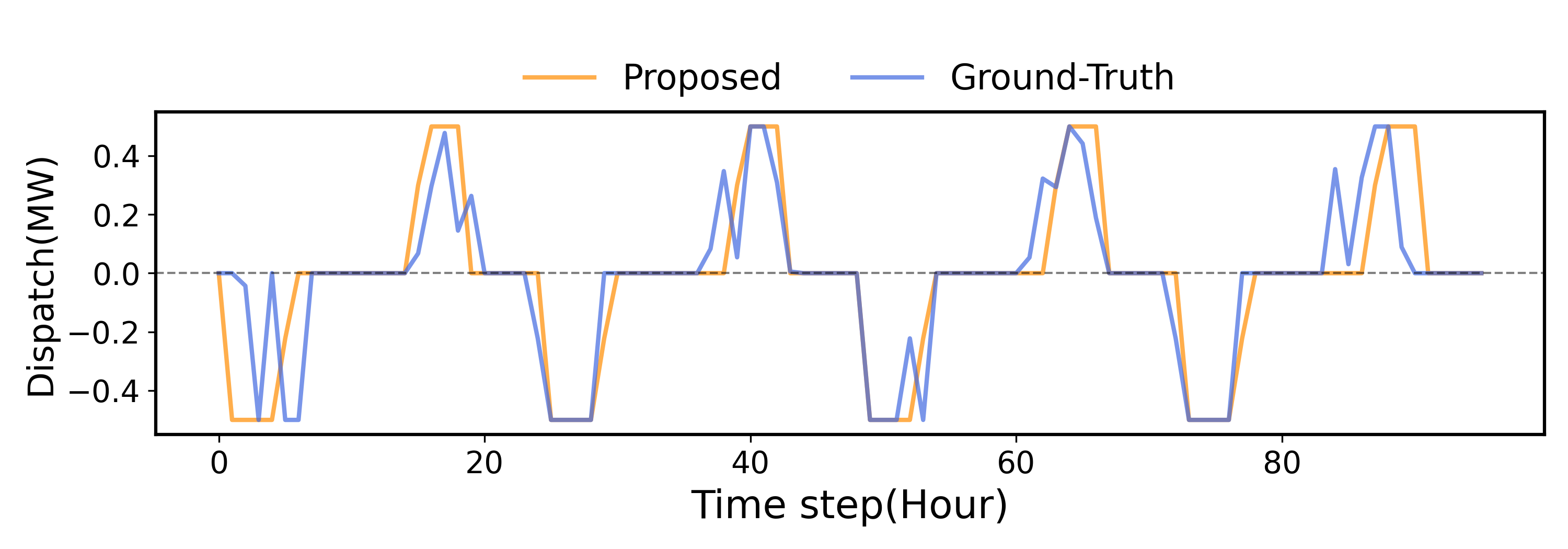} }
    	\caption{{Prediction performance of the proposed method without assuming knowledge of the cost term. Storage behavior with (a) linear cost; (b) linear and quadratic cost.}}
\label{Fig: compv}
\end{figure}

\subsubsection{Storage Behavior Prediction with Real-World Data}
We conduct experiments on public energy storage data in 2020 from Queensland University. The Tesla Powerpack battery
system has a 1.1 MW power rating and a 2.2 MWh capacity. The battery system is controlled by a demand response engine and engages in two major revenue activities: arbitrage and Frequency Control Ancillary Services (FCAS). According to its performance report \cite{WEH20}, the battery system only had a 2-hour duration of FCAS in 2020.  Therefore, energy storage behaviors are dominated by storage arbitrage. Some days did not have charging/discharging activities due to internet connection issues. Historical energy storage behaviors also have missing data issues for several days. These days are removed from the training and testing data. Six months of data are used for training, and three months of data are used for testing.  The data resolution is one sample every 30 minutes.  We include RTP data as input features. {The input features are structured as a $170 \times 96 \times 1$ tensor, where 170 represents the number of samples, 96 corresponds to half-hourly intervals, and 1 denotes the feature type. The target decision data is structured as a $170 \times 24$ matrix. The testing dataset follows the same format and contains 70 samples.}

The parameters of the energy storage arbitrage model are set as follows: power rating $P=1.11 MW$,  storage efficiency $\eta= 0.9$, storage capacity $E=2.2 MWh$, the prediction horizon $T=96$,   we assume the energy {storage behavior} has a linear cost term and a quadratic cost term  $u(p_t,b_t)=C_1p_t+C_2p_t^2$, where the coefficients of discharge cost is set as $C_1= 5$, $C_2= 3$. The hybrid loss parameter $\beta=0.01$.  
Note that predicting real-world energy storage behaviors is inherently challenging. First, complete knowledge of model parameters, such as prediction horizon and strategy, is unavailable. Second, our algorithm requires the SoC to be consistent across days, but data quality issues lead to some days being excluded, complicating the learning process. The stochastic nature of real-world energy storage adds further difficulty for accurate behavior predictions. Despite all these challenges, our method consistently outperforms the benchmark methods on this real-world dataset.


  We compare the proposed method with the first comparison algorithm, using two types of neural networks, LSTM and MLP. Fig. \ref{Fig: comp3} shows the prediction performance of the proposed method and the benchmark methods. The proposed method captures the trend and magnitude in ground-truth energy storage behaviors, albeit with some time deviations. Although the LSTM method predicts discharge decisions well, it consistently fails to predict the magnitude of charge decisions. The MLP method can approximately predict the timing of charge/discharge decisions, but it consistently struggles to predict their magnitude. Hence, the practical applicability of these two methods is highly constrained.  

Table \ref{exp3:conf_matrix} lists the confusion matrices and corresponding metrics. Table \ref{exp3:conf_matrix} shows that the proposed method achieves the best performance for all metrics. Fig. \ref{Fig:barcomp1}(c) shows the total dispatch energy for the proposed and comparison methods. Our method slightly over-predicts charge and discharge energy, while the two benchmark methods exhibit much higher prediction errors. These results underscore the substantial advantage of the proposed method.

\section{Conclusions}\label{con}
This paper introduces a decision-focused approach that integrates the physical energy {storage model} into a neural network-based architecture.  Our method incorporates prior knowledge from the storage arbitrage model and infers the hidden reward that guides energy storage decisions. We provide theoretical analysis for the perturbed loss function. To facilitate effective training, we propose a novel hybrid loss function. We demonstrate our proposed framework on two energy storage applications: the self-scheduling energy storage arbitrage problem and the energy storage behavior prediction problem.  Numerical results on both applications underscore the significant advantages of our proposed method. {In future work, we plan to extend our decision-focused framework to strategic energy storage operation under a price-maker scenario, incorporating a grid-based model and accounting for the storage system's influence on market prices. We will also explore incorporating grid losses \cite{SRL17} into our framework. Additionally, we will integrate our approach into behind-the-meter energy disaggregation for energy storage.}

 
\bibliographystyle{IEEEtran} 
\bibliography{IEEEabrv,ref}
 




\clearpage
\newpage
\section{Appendix}

\begin{algorithm} [!ht]
\caption{Self-Scheduling Energy Storage Arbitrage} 
\begin{algorithmic}[1]  \label{alg2}
\REQUIRE  The training data ${\mathcal{D}}=\{{\bm{x}}^1, {\bm{x}}^2,..., {\bm{x}}^n\}$; The testing data $\hat{\mathcal{D}}=\{(\hat{\bm{x}}^1, \hat{\bm{x}}^2,..., \hat{\bm{x}}^n\}$; The time horizon $T$; The power rating $P$ of energy storage; The SoC efficiency $\eta$ and capacity $E$; The initial SoC $e_0$;  The maximum epochs $M_{\max}$; The cost coefficients in $u(p_t, b_t)$.\\
\STATE 
\textbf{Initialization}: The weights $\bm{w}$ of predictor $g_w(\bm{x})$  are randomly initialized.\\
\textbf{Prepare optimal arbitrage decisions:}
\STATE{Given historical electricity price in ${\mathcal{D}}$, compute the optimal energy storage arbitrage decisions $\{\bar{\bm{y}}^1,\bar{\bm{y}}^2,...,\bar{\bm{y}}^n \}$, } and form the pairs $\mathcal{D}=\{(\bm{x}^1,\bar{\bm{y}}^1), (\bm{x}^2, \bar{\bm{y}}^2),...,(\bm{x}^n, \bar{\bm{y}}^n)\}$; \\
\textbf{Training stage:}
 \FOR{Epochs < $M_{\max}$ }
 \FOR{Each batch in $\mathcal{D}$ }
 \STATE{Predict the reward using the predictor $\hat{\bm{\lambda}}=g_w(\bm{x})$; }
  \STATE{Forward pass to compute the optimal decisions $p_t^*, b_t^*, t=1,2,...,T$ by equation (\ref{eq:arb});}
 \STATE{Compute the loss $\mathcal{L}_\epsilon(\hat{\bm{\lambda}},\bar{\bm{y}})$ by equation (\ref{eq:hybrid})};
  \STATE{Compute the total gradient by equation (\ref{sub:total})  and backward pass to update the weights $\bm{w}$ of predictor.};
 \\

\ENDFOR
\ENDFOR


\textbf{Arbitrage stage:}

\STATE{Predict the reward using the predictor $\hat{\bm{\lambda}}=g_w(\hat{\bm{x}})$; }

 \STATE{Forward pass to compute the optimal schedules $\hat{p}_t, \hat{b}_t, t=1,2,...,T$ by equation (\ref{eq:arb});}

\RETURN
The energy storage schedules  $(\hat{p}_t, \hat{b}_t)$ for $t=1,2,...T$. 

\end{algorithmic} 
\end{algorithm}

\setcounter{theorem}{0}
\begin{proposition}\emph{The differentiablity of perturbed function.}
As noise $\bm{Z}$ is from Gaussian distribution, it has the density $\rho(\bm{Z})\propto \text{exp}(-\psi(\bm{Z}))$. For $\mathcal{R}_{\mathcal{X}}=\underset{{y\in \mathcal{X}}}{\text{max}}||\bm{y}_{\epsilon}^*(\hat{\bm{\lambda}})||$, we have
 \begin{itemize}
 \item $F_{\epsilon}(\hat{\bm{\lambda}}) $ is differentiable, and 
$\nabla_{\hat{\bm{\lambda}}} F_{\epsilon}(\hat{\bm{\lambda}})=\bm{y}_{\epsilon}^*(\hat{\bm{\lambda}}) =\mathbb{E}[ y^*(\hat{\bm{\lambda}}+\epsilon \bm{Z})]=\mathbb{E}[F(\hat{\bm{\lambda}}+\epsilon \bm{Z})\nabla_Z \psi(\bm{Z})/\epsilon]$. \\
 \item $F_{\epsilon}(\hat{\bm{\lambda}})$ is Lipschitz continuous, and  $||F_{\epsilon}(\hat{\bm{\lambda}})-F_{\epsilon}(\tilde{\bm{\lambda}})||
 \leq \mathcal{R}_{\mathcal{X}}|| \hat{\bm{\lambda}}-\tilde{\bm{\lambda}}||.$
\end{itemize}  
\end{proposition}

\begin{proof}
\noindent Based on the definition of $F_{\epsilon}(\hat{\bm{\lambda}})$, we have
\begin{equation} \label{proof: prop1}
\begin{aligned}
& F_{\epsilon}(\hat{\bm{\lambda}})=\mathbb{E}[ F(\hat{\bm{\lambda}}+\epsilon \bm{Z})] \\
& = \int \, F(\hat{\bm{\lambda}}+\epsilon \bm{Z}) \rho(\bm{Z}) d\bm{Z}  \\
& = \int \, F(\bm{G}) \rho(\frac{\bm{G}-\hat{\bm{\lambda}}}{\epsilon}) d\bm{G}  \\
\end{aligned}
\end{equation}

\noindent where $\bm{G}=\hat{\bm{\lambda}}+\epsilon \bm{Z}.$ We use the change of variable trick to obtain the last equation in \eqref{proof: prop1}.
As noise $\bm{Z}$ is from Gaussian distribution, $\psi(\bm{Z})$  is differentiable. Then we have $\nabla \rho(\bm{Z})=- \rho(\bm{Z})\nabla \psi(\bm{Z}) $. We apply the change of variable trick again and
\begin{equation} \label{eq: prop2}
\begin{aligned}
&\nabla_{\hat{\bm{\lambda}}}F_{\epsilon}(\hat{\bm{\lambda}}) \\
& = -\frac{1}{\epsilon}\int \, F(\bm{G}) \nabla_{\hat{\bm{\lambda}}} \rho(\frac{\bm{G}-\hat{\bm{\lambda}}}{\epsilon}) d\bm{G}  \\
& = \frac{1}{\epsilon}\int \, F(\bm{G}) \rho(\frac{\bm{G}-\hat{\bm{\lambda}}}{\epsilon})\nabla_{\hat{\bm{\lambda}}} \psi(\frac{\bm{G}-\hat{\bm{\lambda}}}{\epsilon}) d\bm{G}  \\
& = \frac{1}{\epsilon}\int \, F(\hat{\bm{\lambda}}+\epsilon \bm{Z}) \rho(\bm{Z})\nabla_Z \psi(\bm{Z}) d\bm{Z}  \\
&=\mathbb{E}[F(\hat{\bm{\lambda}}+\epsilon \bm{Z})\nabla_Z \psi(\bm{Z})/\epsilon]
\end{aligned}
\end{equation}

Next we show that $F_{\epsilon}(\hat{\bm{\lambda}})$ is Lipschitz continuous.

From the mean value theorem,  we have

\begin{equation} \label{eq: prop2}
\begin{aligned}
& ||F_\epsilon(\hat{\bm{\lambda}})-F_\epsilon(\tilde{\bm{\lambda}})||\\
& \leq \underset{\hat{\bm{\lambda}}\in \mathbb{R}^T}{\text{sup}}||\nabla F_\epsilon(\hat{\bm{\lambda}})|| \, || \hat{\bm{\lambda}}-\tilde{\bm{\lambda}}|| \\
& =\underset{\hat{\bm{\lambda}}\in \mathbb{R}^T}{\text{sup}}||\bm{y}_{\epsilon}^*(\hat{\bm{\lambda}})|| \, || \hat{\bm{\lambda}}-\tilde{\bm{\lambda}}|| \\
& =\mathcal{R}_{\mathcal{X}} || \hat{\bm{\lambda}}-\tilde{\bm{\lambda}}|| \\
\end{aligned}
\end{equation}

Therefore, we can conclude that $F_\epsilon(\hat{\bm{\lambda}})$ is Lipschitz continuous.

Then we conclude the proof.
\end{proof}
\begin{proposition}\emph{Impact of scaling factor $\epsilon$.}
 \begin{itemize}
 \item $\bm{y}_{\epsilon}^*(\hat{\bm{\lambda}})$ is in the interior of $\mathcal{X}$. 
 \item For $\hat{\bm{\lambda}}$ such that $\bm{y}^*({\hat{\bm{\lambda}}})$ is a unique maximizer: when $\epsilon \rightarrow \infty$, $\bm{y}_{\epsilon}^*(\hat{\bm{\lambda}})\rightarrow \bm{y}_{1}^*(\frac{\hat{\bm{\lambda}}}{\epsilon})=\epsilon \mathbb{E}[ \underset{{\bm{y}\in \mathcal{X}}}{\textbf{argmax}} \{\bm{Z}^T\bm{y}\}]$; when  $\epsilon \rightarrow 0$, $\bm{y}_{\epsilon}^*(\hat{\bm{\lambda}})\rightarrow \bm{y}^*({\hat{\bm{\lambda}}}).$
\end{itemize}  
\end{proposition}

\begin{proof}

$\bm{y}_{\epsilon}^*(\hat{\bm{\lambda}}) =\mathbb{E}[ y^*(\hat{\bm{\lambda}}+\epsilon \bm{Z})]=\int \rho(\bm{Z})  y^*(\hat{\bm{\lambda}}+\epsilon \bm{Z})] d\bm{Z} = \underset{y\in \mathcal{Y}}{\sum} y P(y^*(\hat{\bm{\lambda}}+\epsilon \bm{Z})=y)$

\noindent where $P(y^*(\hat{\bm{\lambda}}+\epsilon \bm{Z})=y)$ denotes the probability of $y^*(\hat{\bm{\lambda}}+\epsilon \bm{Z})=y$. $\mathcal{Y}$ denotes the set of feasible solutions. 

Because $\rho(Z)$ is a positive density function, $P(y^*(\hat{\bm{\lambda}}+\epsilon \bm{Z})=y)$  is positive for all $y \in \mathcal{Y}$ and $\underset{y\in \mathcal{Y}}{\sum} P(y^*(\hat{\bm{\lambda}}+\epsilon \bm{Z})=y)=1$. 

Therefore, $\bm{y}_{\epsilon}^*(\hat{\bm{\lambda}})$ is in the interior of convex hull $\mathcal{X}$ of set $\mathcal{Y}$.


\begin{equation} \label{eq: prop2}
\begin{aligned}
& F_{\epsilon}(\hat{\bm{\lambda}})=\mathbb{E}[ F(\hat{\bm{\lambda}}+\epsilon \bm{Z})] \\
& =  \epsilon \mathbb{E}[ \underset{{\bm{y}\in \mathcal{X}}}{\textbf{max}} \{(\frac{\hat{\bm{\lambda}}}{\epsilon} +\bm{Z})^T\bm{y}-\frac{1}{\epsilon}u(\bm{y})\}] \\
& =\epsilon F_{1}(\frac{\hat{\bm{\lambda}}}{\epsilon}) \\
\end{aligned}
\end{equation}
\noindent where $F_{1}(\frac{\hat{\bm{\lambda}}}{\epsilon})$ denotes the perturbation without scaling by $\epsilon$. We set the subscript $F_{\epsilon}$ as $F_{1}$.

Because we have $\bm{y}_{\epsilon}^*(\hat{\bm{\lambda}})=\nabla_{\hat{\bm{\lambda}}} F_{\epsilon}(\hat{\bm{\lambda}})$ and  $F_{\epsilon}(\hat{\bm{\lambda}})= \epsilon F_{1}(\frac{\hat{\bm{\lambda}}}{\epsilon})$, we can obtain $\bm{y}_{\epsilon}^*(\hat{\bm{\lambda}})=\bm{y}_{1}^*(\frac{\hat{\bm{\lambda}}}{\epsilon})$.
When $\epsilon \rightarrow \infty$,

\begin{equation} \label{eq: prop2}
\begin{aligned}
 F_{\epsilon}(\hat{\bm{\lambda}})
& =  \epsilon  \mathbb{E}[\underset{{\bm{y}\in \mathcal{X}}}{\textbf{max}} \{(\frac{\hat{\bm{\lambda}}}{\epsilon} +\bm{Z})^T\bm{y}-\frac{1}{\epsilon}u(\bm{y})\}] \\
& =   \epsilon \mathbb{E}[ \underset{{\bm{y}\in \mathcal{X}}}{\textbf{max}} \{\bm{Z}^T\bm{y}\}] \\
\end{aligned}
\end{equation}

Thus, $\epsilon \rightarrow \infty$, $\bm{y}_{\epsilon}^*(\hat{\bm{\lambda}})\rightarrow \bm{y}_{1}^*(\frac{\hat{\bm{\lambda}}}{\epsilon})=\epsilon \mathbb{E}[ \underset{{\bm{y}\in \mathcal{X}}}{\textbf{argmax}} \{\bm{Z}^T\bm{y}\}]$

Next we show that when  $\epsilon \rightarrow 0$, $\bm{y}_{\epsilon}^*(\hat{\bm{\lambda}})\rightarrow \bm{y}^*({\hat{\bm{\lambda}}}).$
\\
We define $\epsilon\Omega(\bm{y})$ be the Fenchal dual of $F{(\hat{\bm{\lambda}})}$

\begin{equation} \label{eq: prop2}
\begin{aligned}
&F{({\hat{\bm{\lambda}}})} =  \underset{{\bm{y}\in \mathcal{X}}}{\textbf{sup}} \{{\hat{\bm{\lambda}}} ^T\bm{y}-\epsilon\Omega(\bm{y}) \}
\end{aligned}
\end{equation}

Thus, 

\begin{equation} \label{eq: prop2}
\begin{aligned}
y^*({\hat{\bm{\lambda}}})=\underset{{\bm{y}\in \mathcal{X}}}{\textbf{argmax}} \{{\hat{\bm{\lambda}}}^T\bm{y}-  \epsilon\Omega(\bm{y})\}
\end{aligned}
\end{equation}

We define $\Gamma(\bm{y})$ be the Fenchal dual of $F_1{(\frac{\hat{\bm{\lambda}}}{\epsilon})}$

\begin{equation} \label{eq: prop2}
\begin{aligned}
&F_1{(\frac{\hat{\bm{\lambda}}}{\epsilon})} =  \underset{{\bm{y}\in \mathcal{X}}}{\textbf{sup}} \{(\frac{\hat{\bm{\lambda}}}{\epsilon} )^T\bm{y}-\Gamma(\bm{y}) \}
\end{aligned}
\end{equation}

Then $\epsilon \Gamma(\bm{y})$ is the Fenchal dual of $F_\epsilon{(\hat{\bm{\lambda}})}$
\begin{equation} \label{eq: prop2}
\begin{aligned}
F_\epsilon({\hat{\bm{\lambda}}})=\epsilon F_1{(\frac{\hat{\bm{\lambda}}}{\epsilon})} &=  \underset{{\bm{y}\in \mathcal{X}}}{\textbf{sup}} \{{\hat{\bm{\lambda}}}^T\bm{y}- \epsilon \Gamma(\bm{y})\}\\
\end{aligned}
\end{equation}

Therefore, 

\begin{equation} \label{eq: prop2}
\begin{aligned}
y_\epsilon^*({\hat{\bm{\lambda}}})=\underset{{\bm{y}\in \mathcal{X}}}{\textbf{argmax}} \{{\hat{\bm{\lambda}}}^T\bm{y}- \epsilon \Gamma(\bm{y})\}
\end{aligned}
\end{equation}

From inequality \eqref{proof1: step4}, we  have $F_\epsilon({\hat{\bm{\lambda}}}) \geq  F({\hat{\bm{\lambda}}})$, then
\begin{equation} \label{proof2:inequality}
\begin{aligned}
& {\hat{\bm{\lambda}}}^T\bm{y}_\epsilon^*(\hat{\bm{\lambda}})- \epsilon \Gamma(\bm{y}_\epsilon^*(\hat{\bm{\lambda}})) \geq  {\hat{\bm{\lambda}}}^T\bm{y}^*(\hat{\bm{\lambda}})- \epsilon \Omega(\bm{y}^*(\hat{\bm{\lambda}}))
\end{aligned}
\end{equation}

By rearranging inequality \eqref{proof2:inequality}, we obtain
\begin{equation}  \label{proof:eq36}
\begin{aligned}
&{\hat{\bm{\lambda}}}^T\bm{y}^*(\hat{\bm{\lambda}})- {\hat{\bm{\lambda}}}^T\bm{y}_\epsilon^*(\hat{\bm{\lambda}})   \leq \epsilon (\Omega(\bm{y}^*(\hat{\bm{\lambda}}))-  \Gamma(\bm{y}_\epsilon^*(\hat{\bm{\lambda}})) )
\end{aligned}
\end{equation}

As $\bm{y}_{\epsilon}^*(\hat{\bm{\lambda}})$ is in the interior of $\mathcal{X}$,  we have

\begin{equation} \label{proof:eq37}
\begin{aligned}
&  {\hat{\bm{\lambda}}}^T\bm{y}^*(\hat{\bm{\lambda}})- {\hat{\bm{\lambda}}}^T\bm{y}_\epsilon^*(\hat{\bm{\lambda}})  \geq 0
\end{aligned}
\end{equation}

By combining inequalities \eqref{proof:eq36} and \eqref{proof:eq37}, we obtain

\begin{equation} 
\begin{aligned}
& 0 \leq {\hat{\bm{\lambda}}}^T\bm{y}^*(\hat{\bm{\lambda}})- {\hat{\bm{\lambda}}}^T\bm{y}_\epsilon^*(\hat{\bm{\lambda}})   \leq \epsilon (\Omega(\bm{y}^*(\hat{\bm{\lambda}}))-  \Gamma(\bm{y}_\epsilon^*(\hat{\bm{\lambda}})) )
\end{aligned}
\end{equation}

Because $\Omega(\bm{y})$ and $\Gamma(\bm{y})$ are continuous functions,  they are bounded on $\mathcal{X}$. Therefore, $\epsilon (\Omega(\bm{y}^*(\hat{\bm{\lambda}}))-  \Gamma(\bm{y}_\epsilon^*(\hat{\bm{\lambda}})) )$ is bounded. Consequently, as  $\epsilon \rightarrow 0$, we have ${\hat{\bm{\lambda}}}^T\bm{y}_\epsilon^*(\hat{\bm{\lambda}}) \rightarrow {\hat{\bm{\lambda}}}^T\bm{y}^*(\hat{\bm{\lambda}})$. Since   $\mathcal{X}$ is closed and bounded, for any sequence $\epsilon_n \rightarrow 0$,  the corresponding sequence $y_{\epsilon_n}^*(\hat{\bm{\lambda}})$ is within $\mathcal{X}$. Thus, there exists a sub-sequence $y_{\epsilon_{n_k}}(\hat{\bm{\lambda}})$ that converges to some limit $\underset{{k \to \infty}}{\lim} y_{\epsilon_{n_k}}(\hat{\bm{\lambda}}) \rightarrow \bm{y}_{L}$ and the limit $\bm{y}_{L} \in \mathcal{X}.$   Since ${\hat{\bm{\lambda}}}^T\bm{y}_{\epsilon_{n_k}}^*(\hat{\bm{\lambda}}) \rightarrow {\hat{\bm{\lambda}}}^T\bm{y}^*(\hat{\bm{\lambda}})$, it follows that ${\hat{\bm{\lambda}}}^T\bm{y}_{L} = {\hat{\bm{\lambda}}}^T\bm{y}^*(\hat{\bm{\lambda}})$. Given that $\bm{y}^*(\hat{\bm{\lambda}})$ is the unique solution, we have $\bm{y}^*(\hat{\bm{\lambda}})=\bm{y}_{L}$. Thus, all subsequences $y_{\epsilon_{n_k}}(\hat{\bm{\lambda}})$ converge to the same limit $\bm{y}^*(\hat{\bm{\lambda}})$. Therefore,  when  $\epsilon \rightarrow 0$, $\bm{y}_{\epsilon}^*(\hat{\bm{\lambda}})\rightarrow \bm{y}^*({\hat{\bm{\lambda}}}).$ This concludes the proof.
\end{proof}

\begin{proposition}\emph{The property of perturbed loss function $\mathcal{L}_\epsilon^{PFY}(\hat{\bm{\lambda}},\bar{\bm{y}} )$.}
 \begin{itemize}
 \item $\mathcal{L}_\epsilon^{PFY}(\hat{\bm{\lambda}},\bar{\bm{y}} )$ is a  convex function of  ${\hat{\bm{\lambda}}}$.
 \item $ 0\leq \mathcal{L}_\epsilon^{PFY}(\hat{\bm{\lambda}},\bar{\bm{y}} ) -\mathcal{L}^{FY}(\hat{\bm{\lambda}},\bar{\bm{y}})  \leq TP \epsilon$.
\end{itemize}  
\end{proposition}


\begin{proof}

$\mathcal{L}^{FY}(\hat{\bm{\lambda}},\bar{\bm{y}})$ is a convex function of $\hat{\bm{\lambda}}$ because  it is a maximum of linear function of $\hat{\bm{\lambda}}$. 
Based on the definition of $\mathcal{L}_\epsilon^{PFY}(\hat{\bm{\lambda}},\bar{\bm{y}} )$, and for any $\mu \in [0,1]$, $\hat{\bm{\lambda}}^1$, $\hat{\bm{\lambda}}^2 \in \mathbb{R}^T$,  we have
\begin{equation} \label{proof1: step0}
\begin{aligned}
&\mu \mathcal{L}_\epsilon^{PFY}(\hat{\bm{\lambda}}^1,\bar{\bm{y}} )+(1-\mu)\mathcal{L}_\epsilon^{PFY}(\hat{\bm{\lambda}}^2,\bar{\bm{y}} ) \\&=\mu\mathbb{E}[\mathcal{L}^{FY}(\hat{\bm{\lambda}}^1+\epsilon \bm{Z},\bar{\bm{y}})]+(1-\mu)\mathbb{E}[\mathcal{L}^{FY}(\hat{\bm{\lambda}}^2+\epsilon \bm{Z},\bar{\bm{y}})]\\
&=\mathbb{E}[\mu\mathcal{L}^{FY}(\hat{\bm{\lambda}}^1+\epsilon \bm{Z},\bar{\bm{y}})+(1-\mu)\mathcal{L}^{FY}(\hat{\bm{\lambda}}^2+\epsilon \bm{Z},\bar{\bm{y}})]\\&
\geq \mathbb{E}[\mathcal{L}^{FY}(\mu\hat{\bm{\lambda}}^1+(1-\mu)\hat{\bm{\lambda}}^2+\epsilon \bm{Z},\bar{\bm{y}})]\\
& = \mathcal{L}_\epsilon^{PFY}(\hat{\bm{\lambda}}^\mu,\bar{\bm{y}} )
\end{aligned}
\end{equation}
where $\hat{\bm{\lambda}}^\mu=\mu \hat{\bm{\lambda}}^1+(1-\mu)\hat{\bm{\lambda}}^2$.
The expectation preserves the convexity. Therefore, we conclude that $\mathcal{L}_\epsilon^{PFY}(\hat{\bm{\lambda}},\bar{\bm{y}} ) = \mathbb{E}[\mathcal{L}^{FY}(\hat{\bm{\lambda}}+\epsilon \bm{Z},\bar{\bm{y}})]$  is a convex function of $\hat{\bm{\lambda}}$. \\
\noindent For all $\bm{Z}$, we have
\begin{equation} \label{proof1: step1}
\begin{aligned}
&\underset{{\bm{y} \in \mathcal{X}}}{\textbf{max}}  \{(\hat{\bm{\lambda}} +\epsilon \bm{Z})^T\bm{y}-u(\bm{y})\} \geq   (\hat{\bm{\lambda}} +\epsilon \bm{Z})^T\bm{y}-u(\bm{y})
\end{aligned}
\end{equation}

\noindent We take the expectation over $\bm{Z}$'s distribution on both side of \eqref{proof1: step1}, and we can get

\begin{equation} \label{proof: prop3}
\begin{aligned}
&\mathbb{E}[\underset{{\bm{y} \in \mathcal{X}}}{\textbf{max}} \{(\hat{\bm{\lambda}} +\epsilon \bm{Z})^T\bm{y}-u(\bm{y})\}] \geq   \mathbb{E}[(\hat{\bm{\lambda}} +\epsilon \bm{Z})^T \bm{y}-u(\bm{y}) ] 
\end{aligned}
\end{equation}

\noindent For the right hand side of \eqref{proof: prop3},

\begin{equation} \label{proof1: step3}
\begin{aligned}  
&\mathbb{E}[(\hat{\bm{\lambda}} +\epsilon \bm{Z})^T\bm{y}-u(\bm{y}) ] = (\hat{\bm{\lambda}}+\epsilon \mathbb{E}[\bm{Z}] )^T\bm{y} -u(\bm{y}) \\&= \hat{\bm{\lambda}}^T\bm{y} -u(\bm{y})\\
\end{aligned}
\end{equation}

The last equation in \eqref{proof1: step3} holds because  $\bm{Z}$ is from a Gaussian distribution $\mathcal{N}(\textbf{0},\textbf{1})$ with $\mathbb{E}[\bm{Z}]=\textbf{0}$.

\noindent Then inequality \eqref{proof: prop3} can be written as
\begin{equation} \label{proof1: step4}
\begin{aligned}
&\mathbb{E}[\underset{{\bm{y} \in \mathcal{X}}}{\textbf{max}} \{(\hat{\bm{\lambda}} +\epsilon \bm{Z})^T\bm{y}-u(\bm{y})\}] \geq  \hat{\bm{\lambda}}^T \bm{y}-u(\bm{y})  
\end{aligned}
\end{equation}

\noindent We subtract $\hat{\bm{\lambda}}^T \bar{\bm{y}}  -u(\bar{\bm{y}})$ on both sides,

 \begin{equation} \label{proof1: step5}
\begin{aligned}
&\mathbb{E}[\underset{{\bm{y} \in \mathcal{X}}}{\textbf{max}} \{(\hat{\bm{\lambda}} +\epsilon \bm{Z})^T\bm{y}-u(\bm{y})\}] -( \hat{\bm{\lambda}}^T \bar{\bm{y}} 
 -u(\bar{\bm{y}})) \\&
 \geq    \hat{\bm{\lambda}}^T\bm{y}-u(\bm{y}) -( \hat{\bm{\lambda}}^T \bar{\bm{y}}
 -u(\bar{\bm{y}})) 
\end{aligned}
\end{equation}

\noindent It is clear that the left hand side is $ \mathcal{L}_\epsilon^{PFY}(\hat{\bm{\lambda}},\bar{\bm{y}} )$ and the right hand side is $\mathcal{L}^{FY}(\hat{\bm{\lambda}},\bar{\bm{y}})$. Thus, we conclude that $\mathcal{L}_\epsilon^{PFY}(\hat{\bm{\lambda}},\bar{\bm{y}} ) \geq \mathcal{L}^{FY}(\hat{\bm{\lambda}},\bar{\bm{y}}).$

For all $\bm{Z} \in \mathbb{R}^T$, we have
\begin{equation} \label{proof1: step5}
\begin{aligned}
&\underset{{\bm{y} \in \mathcal{X}}}{\textbf{max}} \{(\hat{\bm{\lambda}} +\epsilon \bm{Z})^T\bm{y}-u(\bm{y})\} \\&
 \leq   \underset{{\bm{y} \in \mathcal{X}}}{\textbf{max}} \{ \hat{\bm{\lambda}}^T\bm{y} -u(\bm{y})\} +\underset{{\bm{y} \in \mathcal{X}}}{\textbf{max}} \{ \epsilon \bm{Z}^T\bm{y}\}\\
 &=\underset{{\bm{y} \in \mathcal{X}}}{\textbf{max}} \{ \hat{\bm{\lambda}}^T\bm{y} -u(\bm{y})\} + \epsilon \underset{{\bm{y} \in \mathcal{X}}}{\textbf{max}} \{ \bm{Z}^T\bm{y}\} \\
\end{aligned}
\end{equation}

Take the expectation on both sides,

\begin{equation} \label{proof1: step5}
\begin{aligned}
&\mathbb{E}[\underset{{\bm{y} \in \mathcal{X}}}{\textbf{max}} \{(\hat{\bm{\lambda}} +\epsilon \bm{Z})^T\bm{y}-u(\bm{y})\}] \\&
 \leq  \underset{{\bm{y} \in \mathcal{X}}}{\textbf{max}} \{ \hat{\bm{\lambda}}^T\bm{y} -u(\bm{y})\} + \epsilon \mathbb{E}[\underset{{\bm{y} \in \mathcal{X}}}{\textbf{max}} \{ \bm{Z}^T\bm{y}\} ]\\
\end{aligned}
\end{equation}

\noindent We subtract $\hat{\bm{\lambda}}^T \bar{\bm{y}}
 -u(\bar{\bm{y}})$ on both sides,

 \begin{equation} \label{proof1: step5}
\begin{aligned}
&\mathbb{E}[\underset{{\bm{y} \in \mathcal{X}}}{\textbf{max}} \{(\hat{\bm{\lambda}} +\epsilon \bm{Z})^T\bm{y}-u(\bm{y})\}] -( \hat{\bm{\lambda}}^T \bar{\bm{y}} 
 -u(\bar{\bm{y}})) \leq \\&\underset{{\bm{y} \in \mathcal{X}}}{\textbf{max}} \{ \hat{\bm{\lambda}}^T\bm{y}-u(\bm{y}) \}-( \hat{\bm{\lambda}}^T \bar{\bm{u}}
 -u(\bar{\bm{y}})) +\epsilon \mathbb{E}[\underset{{\bm{y} \in \mathcal{X}}}{\textbf{max}} \{ \bm{Z}^T\bm{y}\} ]
\end{aligned}
\end{equation}

Because $\mathbb{E}[\underset{\bm{y}\in \mathcal{X}}{\textbf{max}}  \{  \bm{Z}^T\bm{y}\} ] \leq TP$, we conclude that $\mathcal{L}_\epsilon^{PFY}(\hat{\bm{\lambda}},\bar{\bm{y}} ) \leq \mathcal{L}^{FY}(\hat{\bm{\lambda}},\bar{\bm{y}})+ TP\epsilon$. Thus, when $\epsilon \rightarrow 0$, combining two inequalities lead to $\mathcal{L}_\epsilon^{PFY}(\hat{\bm{\lambda}},\bar{\bm{y}}) \rightarrow \mathcal{L}^{FY}(\hat{\bm{\lambda}},\bar{\bm{y}})$
\end{proof}

\begin{proposition}\emph{Gradient of perturbed loss function.}

 \begin{itemize}
 \item The gradient of the perturbed loss function is
  \begin{equation} \label{}
\begin{aligned}
  & {\nabla \mathcal{L}_\epsilon^{PFY}(\hat{\bm{\lambda}},\bar{\bm{y}})} =\bm{y}_\epsilon^*(\hat{\bm{\lambda}})-\bar{\bm{y}}. 
\end{aligned}
\end{equation}
 \item ${\nabla \mathcal{L}_\epsilon^{PFY}(\hat{\bm{\lambda}},\bar{\bm{y}})} $ is Lipschitz continuous.
\end{itemize}  
\end{proposition}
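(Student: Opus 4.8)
The plan is to handle the two claims separately: the gradient formula follows by direct differentiation using Proposition 1, while the smoothness claim requires a uniform bound on the Hessian of $F_\epsilon$, which I would obtain by a second integration-by-parts argument in the spirit of the proof of Proposition 1.

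For the first claim, I would begin from the decomposition $\mathcal{L}_\epsilon^{PFY}(\hat{\bm{\lambda}},\bar{\bm{y}}) = F_\epsilon(\hat{\bm{\lambda}}) - (\hat{\bm{\lambda}}^T\bar{\bm{y}} - u(\bar{\bm{y}}))$, which is immediate from the definition in \eqref{eq: loss1}. Since $u(\bar{\bm{y}})$ does not depend on $\hat{\bm{\lambda}}$ and the term $\hat{\bm{\lambda}}^T\bar{\bm{y}}$ is linear with gradient $\bar{\bm{y}}$, differentiating gives $\nabla_{\hat{\bm{\lambda}}}\mathcal{L}_\epsilon^{PFY}(\hat{\bm{\lambda}},\bar{\bm{y}}) = \nabla_{\hat{\bm{\lambda}}}F_\epsilon(\hat{\bm{\lambda}}) - \bar{\bm{y}}$. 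Invoking the first statement of Proposition 1, which identifies $\nabla_{\hat{\bm{\lambda}}}F_\epsilon(\hat{\bm{\lambda}}) = \bm{y}_\epsilon^*(\hat{\bm{\lambda}})$, yields the claimed formula $\nabla\mathcal{L}_\epsilon^{PFY}(\hat{\bm{\lambda}},\bar{\bm{y}}) = \bm{y}_\epsilon^*(\hat{\bm{\lambda}}) - \bar{\bm{y}}$.

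For the second claim, I note that since $\bar{\bm{y}}$ is fixed, Lipschitz continuity of $\nabla\mathcal{L}_\epsilon^{PFY}$ is equivalent to Lipschitz continuity of the map $\hat{\bm{\lambda}}\mapsto\bm{y}_\epsilon^*(\hat{\bm{\lambda}})$, i.e. to a uniform bound on the Hessian $\nabla^2_{\hat{\bm{\lambda}}}F_\epsilon$. To produce such a bound I would differentiate the integration-by-parts representation from Proposition 1, namely $\bm{y}_\epsilon^*(\hat{\bm{\lambda}}) = \tfrac{1}{\epsilon}\mathbb{E}[F(\hat{\bm{\lambda}}+\epsilon\bm{Z})\,\nabla_Z\psi(\bm{Z})]$, with respect to $\hat{\bm{\lambda}}$. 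Because $F$ is convex with $\nabla F(\hat{\bm{\lambda}}+\epsilon\bm{Z}) = y^*(\hat{\bm{\lambda}}+\epsilon\bm{Z})$ (envelope/Danskin), and for the Gaussian density $\nabla_Z\psi(\bm{Z}) = \bm{Z}$, moving the derivative inside the expectation produces
\begin{equation*}
\nabla_{\hat{\bm{\lambda}}}^2 F_\epsilon(\hat{\bm{\lambda}}) = \frac{1}{\epsilon}\mathbb{E}\big[\, y^*(\hat{\bm{\lambda}}+\epsilon\bm{Z})\,\bm{Z}^T \,\big].
\end{equation*}
I would then bound the operator norm by passing it inside the expectation via Jensen and using that each rank-one term satisfies $\|y^*\bm{Z}^T\| = \|y^*\|\,\|\bm{Z}\|$, with $\|y^*\|\leq R := \max_{\bm{y}\in\mathcal{X}}\|\bm{y}\|$ since $y^*\in\mathcal{X}$ and $\mathcal{X}$ is compact. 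Combining with $\mathbb{E}\|\bm{Z}\| \leq \sqrt{\mathbb{E}\|\bm{Z}\|^2} = \sqrt{T}$ for the standard Gaussian gives the uniform bound $\|\nabla^2 F_\epsilon(\hat{\bm{\lambda}})\| \leq R\sqrt{T}/\epsilon$, whence the mean value inequality shows $\bm{y}_\epsilon^*$, and therefore $\nabla\mathcal{L}_\epsilon^{PFY}$, is Lipschitz with constant $R\sqrt{T}/\epsilon$.

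The main obstacle I anticipate is justifying the differentiation under the integral sign that produces the Hessian formula. The value function $F$ is only convex, so $y^*(\cdot)$ coincides with $\nabla F$ at points of differentiability but may jump on a measure-zero set; the justification rests on $F$ being differentiable almost everywhere together with the continuity of the Gaussian density, so that the non-smooth set contributes nothing to the integral, exactly mirroring the change-of-variable argument already used for Proposition 1. A secondary technical point is controlling the operator norm of the expectation $\mathbb{E}[y^*\bm{Z}^T]$ rather than of a fixed rank-one matrix; here the estimate $\|\mathbb{E}[y^*\bm{Z}^T]\| \leq \mathbb{E}[\|y^*\|\,\|\bm{Z}\|]$ follows from convexity of the operator norm, and I would state it explicitly to keep the argument rigorous.
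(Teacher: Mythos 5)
Your proof is correct, but it reaches both conclusions by routes that differ from the paper's. For the gradient formula, you simply decompose $\mathcal{L}_\epsilon^{PFY}(\hat{\bm{\lambda}},\bar{\bm{y}})=F_\epsilon(\hat{\bm{\lambda}})-(\hat{\bm{\lambda}}^T\bar{\bm{y}}-u(\bar{\bm{y}}))$ and invoke Proposition 1 for $\nabla F_\epsilon=\bm{y}_\epsilon^*$; the paper instead establishes the first-order (sub)gradient inequality $\mathcal{L}^{PFY}_{\epsilon}(\tilde{\bm{\lambda}},\bar{\bm{y}})\geq \mathcal{L}^{PFY}_{\epsilon}(\hat{\bm{\lambda}},\bar{\bm{y}})+(\bm{y}_\epsilon^*(\hat{\bm{\lambda}})-\bar{\bm{y}})^T(\tilde{\bm{\lambda}}-\hat{\bm{\lambda}})$ by a chain of manipulations and then upgrades the subgradient to a gradient via smoothness. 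Your version is shorter and leans more directly on the already-proved differentiability of $F_\epsilon$; the paper's version has the minor virtue of exhibiting the subgradient property explicitly even without smoothness. For Lipschitz continuity, the paper works at first order: it writes $\bm{y}_{\epsilon}^*(\hat{\bm{\lambda}})-\bm{y}_{\epsilon}^*(\tilde{\bm{\lambda}})=\mathbb{E}[(F(\hat{\bm{\lambda}}+\epsilon \bm{Z})-F(\tilde{\bm{\lambda}}+\epsilon \bm{Z}))\nabla_Z\psi(\bm{Z})/\epsilon]$ and applies Cauchy--Schwarz together with the $\mathcal{R}_{\mathcal{X}}$-Lipschitz continuity of $F$, obtaining the constant $\mathcal{R}_{\mathcal{X}}M/\epsilon$ with $M=\mathbb{E}[\|\nabla_Z\psi(\bm{Z})\|^2]^{1/2}$. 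You instead work at second order, differentiating the integration-by-parts representation once more to bound the Hessian of $F_\epsilon$ by $R\sqrt{T}/\epsilon$; since $M=\sqrt{T}$ for the standard Gaussian in $\mathbb{R}^T$, the two constants agree. Your route costs an extra justification of differentiation under the integral sign (which you correctly flag, and which is handled by Rademacher plus dominated convergence since $F$ is Lipschitz), whereas the paper's Cauchy--Schwarz argument avoids ever forming the Hessian; in exchange, your Hessian formula $\nabla^2 F_\epsilon(\hat{\bm{\lambda}})=\frac{1}{\epsilon}\mathbb{E}[y^*(\hat{\bm{\lambda}}+\epsilon\bm{Z})\bm{Z}^T]$ is a reusable byproduct that the paper's argument does not produce.
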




\begin{proof}
\begin{equation}\label{}
\begin{aligned}
 &\mathcal{L}^{PFY}_{\epsilon}(\tilde{\bm{\lambda}},\bar{\bm{y}}) =\\
 &\underset{{\bm{y} \in \mathcal{X}}}{\textbf{max} }     \{ (\tilde{\bm{\lambda}}+\epsilon \bm{Z} )^T\bm{y}-u(\bm{y})\}- (\tilde{\bm{\lambda}}^T \bar{\bm{y}}  -u(\bar{\bm{y}}))\\
 & \geq \underset{{\bm{y} \in \mathcal{X}}}{\textbf{max} }  \{ (\hat{\bm{\lambda}}+\epsilon \bm{Z})^T\bm{y}-u(\bm{y})\}+ {\bm{y}_\epsilon^{*}(\hat{\bm{\lambda}})}^T(\tilde{\bm{\lambda}}-\hat{\bm{\lambda}}) \\
 &-( \tilde{\bm{\lambda}}^T\bar{\bm{y}}-u(\bar{\bm{y}}))\\
 & = \underset{{\bm{y} \in \mathcal{X}}}{\textbf{max} }     \{ (\hat{\bm{\lambda}}+\epsilon \bm{Z})^T\bm{y}-u(\bm{y})\}-(\hat{\bm{\lambda}}^T \bar{\bm{y}}-u(\bar{\bm{y}}))\\&+ \bm{y}_\epsilon^{*}(\hat{\bm{\lambda}})^T(\tilde{\bm{\lambda}}-\hat{\bm{\lambda}})+(\hat{\bm{\lambda}}^T \bar{\bm{y}}-u(\bar{\bm{y}}))- (\tilde{\bm{\lambda}}^T \bar{\bm{y}}
 -u(\bar{\bm{y}}))\\
  & = \underset{{\bm{y} \in \mathcal{X}}}{\textbf{max} }     \{ (\hat{\bm{\lambda}}+\epsilon \bm{Z})^T\bm{y}-u(\bm{y})\}-(\hat{\bm{\lambda}}^T \bar{\bm{y}}-u(\bar{\bm{y}}))\\&+ \bm{y}_\epsilon^{*}(\hat{\bm{\lambda}})^T (\tilde{\bm{\lambda}}-\hat{\bm{\lambda}})+(\hat{\bm{\lambda}}^T \bar{\bm{y}})- (\tilde{\bm{\lambda}}^T \bar{\bm{y}})\\
    & = \underset{{\bm{y} \in \mathcal{X}}}{\textbf{max} }     \{ (\hat{\bm{\lambda}}+\epsilon \bm{Z})^T\bm{y}-u(\bm{y})\}-(\hat{\bm{\lambda}}^T \bar{\bm{y}}-u(\bar{\bm{y}}))\\&+ (\bm{y}_\epsilon^*(\hat{\bm{\lambda}})-\bar{\bm{y}})^T(\tilde{\bm{\lambda}}-\hat{\bm{\lambda}})
\end{aligned}
\end{equation}

\noindent where the inequality follows $\bm{y}_\epsilon^*(\hat{\bm{\lambda}})\in {\textbf{argmax} }     \{(\hat{\bm{\lambda}}+\epsilon \bm{Z})^T\bm{y}-u(\bm{y})\} =\frac{\partial \mathcal{L}^1_\epsilon(\hat{\bm{\lambda}},\bar{\bm{y}} )}{\partial \hat{\bm{\lambda}}} $.  To simplify the notation, we denote $\underset{{\bm{y} \in \mathcal{X}}}{\textbf{max} }     \{ (\tilde{\bm{\lambda}} +\epsilon \bm{Z})^T\bm{y}-u(\bm{y})\}$ as  $\mathcal{L}_{\epsilon}^{1}(\tilde{\bm{\lambda}},{\bm{p}},{\bm{b}})$, and we have
$\mathcal{L}_{\epsilon}^{1}(\tilde{\bm{\lambda}},\bar{\bm{y}})\geq \mathcal{L}_{\epsilon}^{1}(\hat{\bm{\lambda}},\bar{\bm{y}})+ \frac{\partial \mathcal{L}_\epsilon^{1}(\hat{\bm{\lambda}},\bar{\bm{y}})}{\partial \hat{\bm{\lambda}}}^T(\tilde{\bm{\lambda}}-\hat{\bm{\lambda}})$.

\noindent Then we conclude that $\bm{y}_\epsilon^*(\hat{\bm{\lambda}}) -\bar{\bm{y}}$ is a subgradient of $\mathcal{L}_\epsilon^{PFY}(\hat{\bm{\lambda}},\bar{\bm{y}} )$, i.e., $\bm{y}_\epsilon^*(\hat{\bm{\lambda}}) -\bar{\bm{y}} \in \frac{\partial \mathcal{L}_\epsilon^{PFY}(\hat{\bm{\lambda}},\bar{\bm{y}} )}{\partial \hat{\bm{\lambda}}}$. Because $\mathcal{L}_\epsilon^{PFY}(\hat{\bm{\lambda}},\bar{\bm{y}} )$ is smooth, $\bm{y}_\epsilon^*(\hat{\bm{\lambda}}) -\bar{\bm{y}}$ is the gradient of $\mathcal{L}_\epsilon^{PFY}(\hat{\bm{\lambda}},\bar{\bm{y}} )$.
\\

Next, we show ${\nabla \mathcal{L}_\epsilon^{PFY}(\hat{\bm{\lambda}},\bar{\bm{y}} )}$ is Lipschitz continuous. 

\begin{equation} \label{eq: prop2}
\begin{aligned}
&{\nabla \mathcal{L}_\epsilon^{PFY}(\hat{\bm{\lambda}},\bar{\bm{y}} )} - {\nabla \mathcal{L}_\epsilon^{PFY}(\tilde{\bm{\lambda}},\bar{\bm{y}} )} \\
& =(\bm{y}_{\epsilon}^*(\hat{\bm{\lambda}})-\bar{\bm{y}})- (\bm{y}_{\epsilon}^*(\tilde{\bm{\lambda}})-\bar{\bm{y}})\\
&=\bm{y}_{\epsilon}^*(\hat{\bm{\lambda}})- \bm{y}_{\epsilon}^*(\tilde{\bm{\lambda}}) \\
&=\mathbb{E}[F(\hat{\bm{\lambda}}+\epsilon \bm{Z})\psi(\bm{Z})/\epsilon]-\mathbb{E}[F(\tilde{\bm{\lambda}}+\epsilon \bm{Z})\psi(\bm{Z})/\epsilon]\\
&=\mathbb{E}[(F(\hat{\bm{\lambda}}+\epsilon \bm{Z})-F(\tilde{\bm{\lambda}}+\epsilon \bm{Z}))\psi(\bm{Z})/\epsilon]\\
\end{aligned}
\end{equation}

Based on Cauchy–Schwarz inequality, we can obtain

\begin{equation} \label{eq: prop2}
\begin{aligned}
&||{\nabla \mathcal{L}_\epsilon^{PFY}(\hat{\bm{\lambda}},\bar{\bm{y}} )} - {\nabla \mathcal{L}_\epsilon^{PFY}(\tilde{\bm{\lambda}},\bar{\bm{y}} )}|| \\
&=||\mathbb{E}[(F(\hat{\bm{\lambda}}+\epsilon \bm{Z})-F(\tilde{\bm{\lambda}}+\epsilon \bm{Z}))\psi(\bm{Z})/\epsilon]||\\
& \leq ||\mathbb{E}[(F(\hat{\bm{\lambda}}+\epsilon \bm{Z})-F(\tilde{\bm{\lambda}}+\epsilon \bm{Z}))||^2]^{\frac{1}{2}} \mathbb{E}[||\psi(\bm{Z})||^2/\epsilon^2]^{\frac{1}{2}}\\
& \leq \mathcal{R}_{\mathcal{X}} ||\hat{\bm{\lambda}}-\tilde{\bm{\lambda}}|| \mathbb{E}[||\psi(\bm{Z})||^2/\epsilon^2]^{\frac{1}{2}}\\
& = \frac{\mathcal{R}_{\mathcal{X}}M}{\epsilon} ||\hat{\bm{\lambda}}-\tilde{\bm{\lambda}}||\\
\end{aligned}
\end{equation}

Therefore, ${\nabla \mathcal{L}_\epsilon^{PFY}(\hat{\bm{\lambda}},\bar{\bm{y}} )}$ is $\frac{\mathcal{R}_{\mathcal{X}}M}{\epsilon}$-Lipschitz continuous.

\end{proof}

\noindent \textit{Convergence Analysis of the Optimization Layer}

In the main text, we illustrate the difficulty of analyzing the convergence of the whole pipline. Thus, we focus on the perturbed optimization layer and consider $\mathcal{L}_\epsilon^{PFY}(\hat{\bm{\lambda}},\bar{\bm{y}})$ as a function of $\hat{\bm{\lambda}}$. Given that $\mathcal{L}_\epsilon^{PFY}(\hat{\bm{\lambda}},\bar{\bm{y}} )$ is a  convex function of  ${\hat{\bm{\lambda}}}$ in Proposition 3, and ${\nabla \mathcal{L}_\epsilon^{PFY}(\hat{\bm{\lambda}},\bar{\bm{y}})} $ is Lipschitz continuous in Proposition 4, the proof of convergence for the optimization layer is straightforward.
\begin{proof}

As ${\nabla \mathcal{L}_\epsilon^{PFY}(\hat{\bm{\lambda}},\bar{\bm{y}})} $ is $\frac{\mathcal{R}_{\mathcal{X}}M}{\epsilon}$-Lipschitz continuous, it implies that ${\nabla^2 \mathcal{L}_\epsilon^{PFY}(\hat{\bm{\lambda}},\bar{\bm{y}})} 		\preceq  \frac{\mathcal{R}_{\mathcal{X}}M}{\epsilon} \bm{I} $. Take the quadratic expansion over ${ \mathcal{L}_\epsilon^{PFY}(\hat{\bm{\lambda}},\bar{\bm{y}} )} $ and obtain

\begin{equation} \label{eq: convergence1}
\resizebox{0.48\textwidth}{!}{
\begin{math}
\begin{aligned}
&{ \mathcal{L}_\epsilon^{PFY}(\tilde{\bm{\lambda}},\bar{\bm{y}} )}  \leq { \mathcal{L}_\epsilon^{PFY}(\hat{\bm{\lambda}},\bar{\bm{y}} )} +\nabla { \mathcal{L}_\epsilon^{PFY}(\hat{\bm{\lambda}},\bar{\bm{y}} )}^T(\tilde{\bm{\lambda}}-\hat{\bm{\lambda}})+\\
&\frac{1}{2} \nabla^2 { \mathcal{L}_\epsilon^{PFY}(\hat{\bm{\lambda}},\bar{\bm{y}} )}||\tilde{\bm{\lambda}}-\hat{{\bm{\lambda}}} ||_2^2 \\
&\leq { \mathcal{L}_\epsilon^{PFY}(\hat{\bm{\lambda}},\bar{\bm{y}} )} +\nabla { \mathcal{L}_\epsilon^{PFY}(\hat{\bm{\lambda}},\bar{\bm{y}} )}^T(\tilde{\bm{\lambda}}-\hat{\bm{\lambda}})+\frac{\mathcal{R}_{\mathcal{X}}M}{2\epsilon}||\tilde{\bm{\lambda}}-\hat{{\bm{\lambda}}} ||_2^2 \\
\end{aligned}
\end{math}
}
\end{equation}

\noindent  Let $\tilde{\bm{\lambda}}= \hat{\bm{\lambda}}- \alpha \nabla{ \mathcal{L}_\epsilon^{PFY}(\hat{\bm{\lambda}},\bar{\bm{y}} )}$, where $\alpha$ denotes the step size. Plug $\tilde{\bm{\lambda}}$ into \eqref{eq: convergence1}, 

\begin{equation} \label{eq: convergence2}
\begin{aligned}
&{ \mathcal{L}_\epsilon^{PFY}(\tilde{\bm{\lambda}},\bar{\bm{y}} )} 
 \\
 &\leq { \mathcal{L}_\epsilon^{PFY}(\hat{\bm{\lambda}},\bar{\bm{y}} )} +\nabla { \mathcal{L}_\epsilon^{PFY}(\hat{\bm{\lambda}},\bar{\bm{y}} )}^T (\hat{\bm{\lambda}}- \alpha { \nabla \mathcal{L}_\epsilon^{PFY}(\hat{\bm{\lambda}},\bar{\bm{y}} )} -\hat{\bm{\lambda}})\\&+\frac{\mathcal{R}_{\mathcal{X}}M}{2\epsilon}||\hat{\bm{\lambda}}- \alpha \nabla { \mathcal{L}_\epsilon^{PFY}(\hat{\bm{\lambda}},\bar{\bm{y}} )} -\hat{{\bm{\lambda}}} ||_2^2 \\
&=
{ \mathcal{L}_\epsilon^{PFY}(\hat{\bm{\lambda}},\bar{\bm{y}} )} -\nabla { \mathcal{L}_\epsilon^{PFY}(\hat{\bm{\lambda}},\bar{\bm{y}} )}^T( \alpha { \nabla \mathcal{L}_\epsilon^{PFY}(\hat{\bm{\lambda}},\bar{\bm{y}} )})+\\&\frac{\mathcal{R}_{\mathcal{X}}M \alpha^2}{2\epsilon}|| \nabla { \mathcal{L}_\epsilon^{PFY}(\hat{\bm{\lambda}},\bar{\bm{y}} )}  ||_2^2 \\
&= 
{ \mathcal{L}_\epsilon^{PFY}(\hat{\bm{\lambda}},\bar{\bm{y}} )} - \alpha || \nabla { \mathcal{L}_\epsilon^{PFY}(\hat{\bm{\lambda}},\bar{\bm{y}} )}  ||_2^2+\\
&\frac{\mathcal{R}_{\mathcal{X}}M \alpha^2}{2\epsilon}|| \nabla { \mathcal{L}_\epsilon^{PFY}(\hat{\bm{\lambda}},\bar{\bm{y}} )}  ||_2^2 \\
&= 
 \mathcal{L}_\epsilon^{PFY}(\hat{\bm{\lambda}},\bar{\bm{y}} ) - \alpha (1-\frac{\mathcal{R}_{\mathcal{X}}M \alpha}{2\epsilon})|| \nabla { \mathcal{L}_\epsilon^{PFY}(\hat{\bm{\lambda}},\bar{\bm{y}} )}  ||_2^2 \\
\end{aligned}
\end{equation}

\noindent When $\alpha \leq \frac{\epsilon}{\mathcal{R}_{\mathcal{X}}M}$, $-(1-\frac{\mathcal{R}_{\mathcal{X}}M \alpha}{2\epsilon}) =(\frac{\mathcal{R}_{\mathcal{X}}M \alpha}{2\epsilon}-1) \leq -\frac{1}{2}$

\begin{equation} \label{eq: convergence3}
\begin{aligned}
&{ \mathcal{L}_\epsilon^{PFY}(\tilde{\bm{\lambda}},\bar{\bm{y}} )} 
\leq
 \mathcal{L}_\epsilon^{PFY}(\hat{\bm{\lambda}},\bar{\bm{y}} ) - \frac{\alpha}{2}|| \nabla { \mathcal{L}_\epsilon^{PFY}(\hat{\bm{\lambda}},\bar{\bm{y}} )}  ||_2^2 \\
\end{aligned}
\end{equation}

Inequality \eqref{eq: convergence3} implies that at each iteration, the value of the objective function decreases until it reaches the optimal value ${\mathcal{L}_\epsilon^{PFY}(\hat{\bm{\lambda}}^*,\bar{\bm{y}} )}=0$.

Since ${\mathcal{L}_\epsilon^{PFY}(\hat{\bm{\lambda}},\bar{\bm{y}} )}$ is convex,

\begin{equation} \label{eq: convergence4}
\begin{aligned}
&{ \mathcal{L}_\epsilon^{PFY}(\hat{\bm{\lambda}}^*,\bar{\bm{y}} )} 
\geq 
 \mathcal{L}_\epsilon^{PFY}(\hat{\bm{\lambda}},\bar{\bm{y}} ) + \nabla \mathcal{L}_\epsilon^{PFY}(\hat{\bm{\lambda}},\bar{\bm{y}} )^T(\bm{\lambda}^*-\bm{\lambda}) \\
\end{aligned}
\end{equation}
Move ${ \mathcal{L}_\epsilon^{PFY}(\hat{\bm{\lambda}},\bar{\bm{y}} )} $ to the left-hand side,
\begin{equation} \label{eq: convergence5}
\begin{aligned}
&{ \mathcal{L}_\epsilon^{PFY}(\hat{\bm{\lambda}},\bar{\bm{y}} )} 
\leq 
 \mathcal{L}_\epsilon^{PFY}(\hat{\bm{\lambda}}^*,\bar{\bm{y}} ) + \nabla \mathcal{L}_\epsilon^{PFY}(\hat{\bm{\lambda}},\bar{\bm{y}} )^T(\bm{\lambda}-\bm{\lambda}^*) \\
\end{aligned}
\end{equation}

\noindent Plug \eqref{eq: convergence5} into \eqref{eq: convergence3}, we obtain

\begin{equation} \label{eq: convergence6}
\begin{aligned}
{\mathcal{L}_\epsilon^{PFY}(\tilde{\bm{\lambda}},\bar{\bm{y}})} 
\leq \, \, &
 \mathcal{L}_\epsilon^{PFY}(\hat{\bm{\lambda}}^*,\bar{\bm{y}} ) + \nabla \mathcal{L}_\epsilon^{PFY}(\hat{\bm{\lambda}},\bar{\bm{y}} )^T(\bm{\lambda}-\bm{\lambda}^*) \\&- \frac{\alpha}{2}|| \nabla { \mathcal{L}_\epsilon^{PFY}(\hat{\bm{\lambda}},\bar{\bm{y}} )}  ||_2^2 \\
\end{aligned}
\end{equation}
Thus,

\begin{equation} \label{eq: convergence7}
\begin{aligned}
&{ \mathcal{L}_\epsilon^{PFY}(\tilde{\bm{\lambda}},\bar{\bm{y}} )} - \mathcal{L}_\epsilon^{PFY}(\hat{\bm{\lambda}}^*,\bar{\bm{y}} ) 
\\
&\leq  \nabla \mathcal{L}_\epsilon^{PFY}(\hat{\bm{\lambda}},\bar{\bm{y}} )^T(\bm{\lambda}-\bm{\lambda}^*) - \frac{\alpha}{2}|| \nabla { \mathcal{L}_\epsilon^{PFY}(\hat{\bm{\lambda}},\bar{\bm{y}} )}  ||_2^2 \\
&=  \frac{1}{2\alpha} ({2\alpha} \nabla \mathcal{L}_\epsilon^{PFY}(\hat{\bm{\lambda}},\bar{\bm{y}} )^T(\bm{\lambda}-\bm{\lambda}^*) - {\alpha}^2|| \nabla { \mathcal{L}_\epsilon^{PFY}(\hat{\bm{\lambda}},\bar{\bm{y}} )}  ||_2^2) \\
&=  \frac{1}{2\alpha} ({2\alpha} \nabla \mathcal{L}_\epsilon^{PFY}(\hat{\bm{\lambda}},\bar{\bm{y}} )^T(\bm{\lambda}-\bm{\lambda}^*) - {\alpha}^2|| \nabla { \mathcal{L}_\epsilon^{PFY}(\hat{\bm{\lambda}},\bar{\bm{y}} )}||_2^2 \\& -||\hat{\bm{\lambda}}-\hat{\bm{\lambda}}^*||_2^2+||\hat{\bm{\lambda}}-\hat{\bm{\lambda}}^*||_2^2) \\
&=  \frac{1}{2\alpha} (||\hat{\bm{\lambda}}-\hat{\bm{\lambda}}^*||_2^2-||\hat{\bm{\lambda}}-\nabla { \mathcal{L}_\epsilon^{PFY}(\hat{\bm{\lambda}},\bar{\bm{y}} )}-\hat{\bm{\lambda}}^*||_2^2) \\
&=  \frac{1}{2\alpha} (||\hat{\bm{\lambda}}-\hat{\bm{\lambda}}^*||_2^2-||\tilde{\bm{\lambda}}-\hat{\bm{\lambda}}^*||_2^2) \\
\end{aligned}
\end{equation}

The inequality in \eqref{eq: convergence7} holds for every iteration of gradient descent.  By replacing $\tilde{\bm{\lambda}}^*$ with $\hat{\bm{\lambda}}^{(i)}$ for $k$ iterations and summing them together, we obtain

\begin{equation} \label{eq: convergence8}
\begin{aligned}
&\sum_{i=1}^k{ (\mathcal{L}_\epsilon^{PFY}(\hat{\bm{\lambda}}^{(i)},\bar{\bm{y}} )} - \mathcal{L}_\epsilon^{PFY}(\hat{\bm{\lambda}}^*,\bar{\bm{y}}) ) 
\\
&= \sum_{i=1}^k \frac{1}{2\alpha} (||\hat{\bm{\lambda}}^{(i-1)}-\hat{\bm{\lambda}}^*||_2^2-||\hat{\bm{\lambda}}^{(i)}-\hat{\bm{\lambda}}^*||_2^2) \\
&= \frac{1}{2\alpha} (||\hat{\bm{\lambda}}^{(0)}-\hat{\bm{\lambda}}^*||_2^2-||\hat{\bm{\lambda}}^{(k)}-\hat{\bm{\lambda}}^*||_2^2) \\
&\leq   \frac{1}{2\alpha} (||\hat{\bm{\lambda}}^{(0)}-\hat{\bm{\lambda}}^*||_2^2) \\
\end{aligned}
\end{equation}
where the transition from the first equality to the second equality  is based on the telescopic sum. Because the value of $\mathcal{L}_\epsilon^{PFY}(\hat{\bm{\lambda}},\bar{\bm{y}} )$ is decreasing in each iteration until it reaches the optimal value, we have


\begin{equation} \label{eq: convergence8}
\begin{aligned}
&{ \mathcal{L}_\epsilon^{PFY}(\hat{\bm{\lambda}}^{(k)},\bar{\bm{y}} ) } - \mathcal{L}_\epsilon^{PFY}(\hat{\bm{\lambda}}^*,\bar{\bm{y}} )\\ 
&\leq \frac{1}{k}\sum_{i=1}^k{ (\mathcal{L}_\epsilon^{PFY}(\hat{\bm{\lambda}}^{(i)},\bar{\bm{y}} )} -
\mathcal{L}_\epsilon^{PFY}(\hat{\bm{\lambda}}^*,\bar{\bm{y}} ) ) \\
& \leq  \frac{||\hat{\bm{\lambda}}^{(0)}-\hat{\bm{\lambda}}^*||_2^2}{2\alpha k} 
\end{aligned}
\end{equation}

Therefore, we conclude the proof.

\end{proof} 

\end{document}